\definecolor{ourblue}{RGB}{135,206,250}
\definecolor{ourgreen}{RGB}{0,100,0}
\definecolor{ourred}{RGB}{176,23,31}
\definecolor{lipicsyellow}{rgb}{0.99,0.78,0.07}
\tikzstyle{vert}=[circle,inner sep=1.5,fill=white,draw,minimum size=.3cm]
\tikzstyle{vert2}=[circle,inner sep=1.5,fill=white,draw,minimum size=.2cm]
\theoremstyle{plain}
\newtheorem{theorem}{Theorem}[section]
\newtheorem{definition}[theorem]{Definition}
\newtheorem{lemma}[theorem]{Lemma}
\newtheorem{observation}[theorem]{Observation}
\newtheorem{cor}[theorem]{Corollary}
\newtheorem{proposition}[theorem]{Proposition}
\theoremstyle{definition}
\newtheorem{claim}{Claim}
\newenvironment{proofofclaim}[1][Proof]{\noindent\emph{#1.} }{\ensuremath{\hfill\square}}
\crefname{lem}{Lemma}{Lemmata}
\crefname{proposition}{Proposition}{Proposition}
\crefname{theorem}{Theorem}{Theorems}
\crefname{cor}{Corollary}{Corollaries}
\crefname{claim}{Claim}{Claims}
\crefname{observation}{Observation}{Obervations}
\crefname{definition}{Definition}{Definitions}
\crefname{figure}{Figure}{Figures}
\crefname{section}{Section}{Sections}
\crefname{subsection}{Section}{Sections}
\newcommand{\TempColoring}{\textsc{Temporal Coloring}}
\newcommand{\SWTempColoring}{\textsc{SW-Temp.\ Coloring}}
\newcommand{\SWTempKColoring}{\textsc{SW-Temp.\ $k$-Coloring}}
\newcommand{\MinSWTempColoring}{\textsc{Minimum SW-Temp.\ Coloring}}
\newcommand{\tempKColoring}{temporal $k$-coloring}
\newcommand{\propTempKColoring}{proper \tempKColoring}
\newcommand{\DeltaTempKColoring}{sliding $\Delta$-window temporal $k$-coloring}
\newcommand{\propDeltaTempKColoring}{proper \DeltaTempKColoring}
\newcommand{\DeltaTempColoring}{sliding $\Delta$-window temporal coloring}
\newcommand{\propDeltaTempColoring}{proper \DeltaTempColoring}
\newcommand{\THREEFOURSAT}{\textsc{Exact (3,4)-SAT}}
\newcommand{\ONEINTHREESAT}{\textsc{Monotone Exactly 1-in-3 SAT}}
\DeclareOldFontCommand{\bf}{\normalfont\bfseries}{\mathbf}
\begin{document}
\title{Sliding Window Temporal Graph Coloring\footnote{GM and VZ are supported by the EPSRC grant EP/P020372/1, HM is supported by the DFG project MATE (NI 369/17). An extended abstract of this paper is accepted for publication at AAAI~'19~\cite{Mertzios2019aaai}.}}

\author[1]{George B.\ Mertzios}

\affil[1]{\small Department of Computer Science, Durham University, Durham, UK\\ \texttt{\{george.mertzios,viktor.zamaraev\}@durham.ac.uk}}

\author[2]{Hendrik~Molter}
\affil[2]{\small Algorithmics and Computational Complexity, Faculty~IV, TU Berlin, {Berlin, Germany}\\ \texttt{h.molter@tu-berlin.de}}

\author[1]{Viktor Zamaraev}

\date{ }

\maketitle
\begin{abstract}
Graph coloring is one of the most famous 
computational problems with applications in a wide range of areas such as planning and scheduling, resource allocation, and pattern matching. 
So far coloring problems are mostly studied on static graphs, which often stand in stark contrast to practice where data is inherently dynamic and subject to discrete changes over time. 
A temporal graph is a graph whose edges are assigned a set of integer time labels, indicating at which discrete time steps the edge is active. 
In this paper we present a natural temporal extension of the classical graph coloring problem. 
Given a temporal graph and a natural number $\Delta$, 
we ask for a coloring sequence for each vertex such that 
(i)~in every sliding time window of $\Delta$ consecutive time steps, in which an edge is active,
this edge is properly colored (i.e.~its endpoints are assigned two different colors) at least once during that time window, 
and (ii) the total number of different colors is minimized. 
This sliding window temporal coloring problem abstractly captures many realistic graph coloring scenarios in which the underlying network changes over time, 
such as dynamically assigning communication channels to moving agents. 
We present a thorough investigation of the computational complexity of this temporal coloring problem. 
More specifically, we prove strong computational hardness results, 
complemented by efficient exact and approximation algorithms. 
Some of our algorithms are linear-time fixed-parameter tractable 
with respect to appropriate parameters, 
while others are asymptotically almost optimal under the Exponential Time Hypothesis (ETH).
\end{abstract}

\section{Introduction}
\label{sec:intro}

A great variety of modern, as well as of traditional networks are dynamic in nature 
as their link availability changes over time. 
Just a few indicative examples of such inherently dynamic networks 
are information and communication networks, social networks, transportation networks, 
and several physical systems~\cite{Holme-Saramaki-book-13,michailCACM}. 
All these application areas share the common characteristic that the network structure, 
i.e.~the underlying graph topology, is subject to \emph{discrete changes over time}. 
In this paper, embarking from the foundational work of Kempe et al.~\cite{kempe}, 
we adopt a simple and natural model for time-varying networks, 
given by a graph with time-labels on its edges,
while the vertex set is fixed.

\begin{definition}[Temporal Graph]
\label{temp-graph-def} A \emph{temporal graph} is a pair $(G,\lambda)$,
where $G=(V,E)$ is an underlying (static) graph and $\lambda :E\rightarrow
2^{\mathbb{N}}$ is a \emph{time-labeling} function which assigns to every
edge of $G$ a set of discrete-time labels.
\end{definition}

For every edge $e\in E$ in the underlying graph $G$ of a temporal graph 
$(G,\lambda)$, $\lambda (e)$ denotes the set of time slots at which $e$ is \emph{active}. 
Due to their relevance and applicability in many areas, 
temporal graphs have been studied from various perspectives and under different names 
such as \emph{time-varying}~\cite{FlocchiniMS09,TangMML10-ACM,krizanc1}, 
\emph{dynamic}~\cite{GiakkoupisSS14,CasteigtsFloccini12}, 
\emph{evolving}~\cite{xuan,Ferreira-MANETS-04,clementi10}, 
and \emph{graphs over time} \cite{Leskovec-Kleinberg-Faloutsos07}. 
For a comprehensive overview on the existing models 
and results on temporal graphs from a (distributed) computing perspective see the 
surveys~\cite{michail2016introduction,latapy2017stream,CasteigtsFloccini12,flocchini1,flocchini2}.

The conceptual shift from static to temporal graphs imposes new challenges in algorithmic computation 
and complexity. 
Now the classical computational problems have to be appropriately redefined in the temporal setting 
in order to properly capture the notion of time. 
Motivated by the fact that, due to causality, information in temporal graphs can
``flow'' only along sequences of edges whose time-labels are increasing, 
most temporal graph parameters and optimization problems that have
been studied so far are based on the notion of temporal paths and other
``path-related'' notions, such as temporal analogues of distance, reachability, separators, diameter,
exploration, and 
centrality~\cite{akrida16,erlebach15,Mertzios13,michailTSP16,akridaTOCS17,enright2018deleting,ZschocheFMN18,FluschnikMNZ18}.

Recently only few attempts have been made to define and study ``non-path'' temporal graph problems. 
Motivated by the contact patterns among high-school students, 
Viard et al.~\cite{viardCliqueTCS16}, 
introduced $\Delta$-cliques, an extension of the concept of cliques to temporal graphs (see also~\cite{himmel17,BHMMNS18}). Chen et al.~\cite{CMSS2018} presented an extension of the cluster editing problem to temporal graphs.
Furthermore, Akrida et al.~\cite{Akrida-ICALP18} introduced the notion of temporal vertex cover, 
motivated by applications of covering problems in transportation and sensor networks. 
Temporal extensions of the classical graph coloring problem have also been previously studied by Yu et al.~\cite{yu2013algorithms} (see also~\cite{ghosal2015channel}) in the context of channel assignment in mobile wireless networks. 
In this problem, every edge has to be properly colored 
in every snapshot of the input temporal graph~$(G,\lambda)$, 
while the goal is to minimize some linear combination of the total number of colors used and the number of color re-assignments on the vertices~\cite{yu2013algorithms}. 
In this temporal coloring approach, the notion of time is only captured by the fact that the number of 
re-assignments affects the value of the target objective function, 
while the fundamental solution concept remains the same as in static graph coloring; that is,~\emph{every} individual (static) snapshot has to be properly colored. 
Using this, Yu et al.~\cite{yu2013algorithms} presented generic methods to adapt known algorithms 
and heuristics from static graph coloring to deal with their new objective function.
Other temporal extensions of the classical vertex and edge coloring problems have been recently studied by
Vizing~\cite{vizing2015coloring}. Vizing considered only temporal graphs of lifetime two, and in his problems
every object to color (vertex or edge) has to be colored in exactly one of the snapshots of the input temporal graph
in such a way that any two objects that are assigned the same color in the same snapshot are not adjacent in this snapshot. 
The goal of the problems is to minimize the total number of used colors.

In this paper we introduce and rigorously study a different, yet natural temporal extension of the classical graph coloring problem, 
called \textsc{Sliding Window Temporal Coloring} (for short, \SWTempColoring). 
In \SWTempColoring\ the input is a temporal graph~$(G,\lambda)$ and two natural numbers $\Delta$ and~$k$. 
At every time step~$t$, every vertex has to be assigned one color, under the following constraint: 
Every edge $e$ has to be properly colored  
at least once during \emph{every time window} of $\Delta$ consecutive time steps, 
and this must happen at a time step~$t$ in this window when $e$ is active. 
Now the question is whether there exists such a temporal coloring over the whole lifetime of the input temporal graph that uses at most $k$ colors. 
In contrast to the model of Yu et al.~\cite{yu2013algorithms}, the solution concept in \SWTempColoring\ is fundamentally different to that of static graph coloring as it takes into account the inherent dynamic nature of the network. 
Indeed, even to verify whether a given solution is feasible, it is not sufficient to just consider every snapshot independently.

Our temporal extension of the static graph coloring problem is motivated by applications in 
mobile sensor networks and in planning. 
Consider the following scenario: every mobile agent
broadcasts information over a specific communication channel while it listens on all \emph{other} channels. 
Thus, whenever two mobile agents are sufficiently close, they can exchange information only if they broadcast on different channels. 
We assume that agents can switch channels at any time. 
To ensure a high degree of information exchange, it makes sense to find a schedule of assigning broadcasting 
channels to the agents over time which minimizes the number of necessary channels, 
while allowing each pair of agents to communicate at least once within every small time window 
in which they are close to each other.

To further motivate the questions raised in this work, imagine an organization which, in order to ensure compliance with the national laws and the institutional policies, requires its employees to \emph{regularly} undertake special training that is relevant to 
their role within the organization. 
Such training requirements can be naturally grouped within training ``themes'', 
concerning --for example-- the General Data Protection Regulation (GDPR) of the EU for staff 
dealing with personal data
or equality and diversity issues when hiring new employees for Human Resources staff, etc). 
One reasonable organizational requirement for such a regular staff training is that every employee has to undertake all needed pieces of training at least once within every time-window of a specific length~$\Delta$ (e.g.~$\Delta = 12$ months). 
All training sessions are offered by experts in predefined ``training periods'' (e.g.~annually every January, May, and September), 
while each session takes a fixed amount of time to run (e.g.~a full day during the corresponding training period). 
This situation can be naturally modeled as a \emph{temporal} graph problem: 
\begin{inparaenum}[(i)]
\item each time slot $t$ represents a predefined ``training period''; 
\item each vertex $v$ denotes one of the themes that are offered for training by the organization; 
\item the different colors that a vertex $v$ can take at time slot~$t$ represent all different 
days in which the theme $v$ can be taught during the training period $t$;
\item an edge $\{u,v\}$ that is active at the time slot $t$ means that the themes $u$ and $v$ share at least one participant at the corresponding training period.
\end{inparaenum} 
Note that, since the training needs of specific staff members change over time, an edge between two themes $u$ and $v$ may repeatedly appear and disappear over time,
and thus the above
graph is temporal. 
If a participant is planned to undertake training on both themes $u,v$ at the same time slot~$t$, 
then these themes have to run at different days of the time slot~$t$, i.e.~$u$ and $v$ have to be assigned different colors at time $t$. 
In such a situation, it is natural for the organization to try to schedule all training sessions 
in such a way that the total \emph{duration} (i.e.~number of different colors) of every training period~$t$ 
never exceeds~$k$ different days, while simultaneously meeting all regular training requirements.

\subsection{Our Contribution.}

In this paper we introduce the problem \textsc{Sliding Window Temporal Coloring} (for short, \SWTempColoring) 
and we present a thorough investigation of its computational complexity. 
All our notation and the formal definition of the temporal problems that we study are presented in Section~\ref{sec:preliminaries}. 
First we investigate in Section~\ref{sec:TC} an interesting special case of \SWTempColoring, 
called \TempColoring, where the length $\Delta$ of the sliding time window is equal 
to the whole lifetime~$T$ of the input temporal graph. 
We start by proving in \cref{thm:TempKColoringNPhard} that \TempColoring\ is NP-hard even for $k=2$, 
and even when every time slot consists of one clique and isolated vertices. 
This is in wide contrast to the static coloring problem, where it can be decided in linear time 
whether a given (static) graph $G$ is 2-colorable, i.e.~whether $G$ is bipartite. 
On the positive side, we show in \cref{th:TCpolyKernel} that, given any input temporal graph $(G,\lambda)$ 
for \TempColoring\ with $n$ vertices and lifetime $T$, 
we can compute an equivalent instance $(G',\lambda')$ 
on the same vertices but with lifetime $T'\leq m$, where $m$ is the number of edges in 
the underlying graph $G$. Moreover we show that the new instance can be computed in polynomial time. 
Formally, \cref{th:TCpolyKernel} shows that \TempColoring\ 
admits a polynomial kernel when parameterized by the number $n$ of vertices of the input temporal graph. 
That is, we can efficiently preprocess any instance of \TempColoring\ 
to obtain an equivalent instance whose 
size only depends polynomially on the size of the underlying graph $G$ 
and not on the lifetime~$T$ of $(G,\lambda)$.

In Section~\ref{sec:SW-TC} and in the reminder of the paper we deal with the general version 
of \SWTempColoring, where the value of $\Delta$ is arbitrary. 
On the one hand, we show that the problem is hard even on very restricted special classes of input temporal graphs. 
On the other hand, assuming the Exponential Time Hypothesis (ETH), we give an asymptotically optimal exponential-time algorithm for \SWTempColoring\ whenever $\Delta$ is constant. 
Moreover we show how to extend it to get an algorithm
which runs in linear time if the number $n$ of vertices is constant. 
Note here that the \emph{size} of the input temporal graph also depends on its lifetime $T$ 
whose value can still be arbitrarily large, independently of $n$. 
Furthermore note that this assumption about $n$ being a constant can be also reasonable in 
practical situations; 
for example, in our motivation above about planning the training of staff in an organization, 
the value of $n$ equals the number of different ``training themes'' to be run, which can be expected to be rather small.

Finally we consider in Section~\ref{sec:SW-TC} an optimization variant of \SWTempColoring\ where the
number of colors is to be minimized. We give an approximation algorithm with an
additive error of 1 which runs in linear time on instances where the
underlying graph $G$ of the input temporal graph $(G,\lambda)$ has a constant-size vertex cover.
From a classification standpoint this is also optimal since the problem remains
NP-hard to solve optimally on temporal graphs where the underlying graph has a
constant-size vertex cover.

\section{Preliminaries and Notation}
\label{sec:preliminaries}

Given a (static) graph $G$, we denote by $V(G)$ and $E(G)$ the sets of its
vertices and edges, respectively. An edge between two vertices~$u$ and~$v$
of~$G$ is denoted by $\{u,v\}$, 
and in this case~$u$ and~$v$ are said to be \emph{adjacent} in~$G$. 
A \emph{complete graph} (or a \emph{clique}) is a graph where every pair of vertices is adjacent. 
The complete graph on $n$ vertices is denoted by $K_n$. 
For every $i,j\in \mathbb{N}$, where $i\leq j$, we let $[i,j]=\{i,i+1,\ldots ,j\}$ and $[j]=[1,j]$. 
Throughout the paper we consider temporal graphs with \emph{finite lifetime} $T$, that is, there is an maximum label assigned by $\lambda $ to an
edge of~$G$, called the \emph{lifetime} of $(G,\lambda )$; it is denoted by $%
T(G,\lambda )$, or simply by $T$ when no confusion arises. Formally, $%
T(G,\lambda )=\max \{t\in \lambda (e):e\in E\}$. 
We refer to each
integer $t\in [T]$ as a \emph{time slot} of $%
(G,\lambda )$. The \emph{instance} (or \emph{snapshot}) of $(G,\lambda )$ 
\emph{at time}~$t$ is the static graph $G_{t}=(V,E_{t})$, where $%
E_{t}=\{e\in E:t\in \lambda (e)\}$. 
If $E_t = \emptyset$, we call $G_t=(V,E_t)$ a \emph{trivial snapshot}.
For every subset $S\subseteq [T]$ of time slots, we denote by $(G,\lambda )|_S$ the restriction of $%
(G,\lambda )$ to the time slots in the set $S$. In particular, for the case where 
$S=[i,j]$ for some $i,j\in [T]$, where $i\leq j$, we have that 
$(G,\lambda )|_{[i,j]}$ is the sequence of the instances $G_{i},G_{i+1},\ldots ,G_{j}$.
We assume in the remainder of the paper that every edge of $G$ appears in at
least one time slot until $T$, namely $\bigcup_{t=1}^{T}E_{t}=E$.

In the remainder of the paper we denote by $n=|V|$ and $m=|E|$ the number of
vertices and edges of the underlying graph $G$, respectively, unless
otherwise stated. Furthermore, unless otherwise stated, we assume that the
labeling $\lambda $ is arbitrary, i.e.~$(G,\lambda )$ is given with an
explicit list of labels for every edge. That is, the \emph{size} of the
input temporal graph $(G,\lambda )$ is $O\left(
|V|+\sum_{t=1}^{T}|E_{t}|\right) =O(n+mT)$. In other cases, where $\lambda $
is more restricted, e.g.~if~$\lambda $ is periodic or follows another
specific temporal pattern, there may exist more succinct representations
of the input temporal graph.

For every $v\in V$ and every time slot $t$, we denote the \emph{appearance
of vertex} $v$ \emph{at time} $t$ by the pair $(v,t)$. 
That is, every vertex $v$ has $T$ different appearances (one for each time slot) during the
lifetime of $(G,\lambda)$. 
For every time slot $t\in [T]$ we denote by $V_{t}=\{(v,t) : v\in V\}$ the set of
all vertex appearances of $(G,\lambda)$ at the time slot $t$. 
Note that the set of all vertex appearances in $(G,\lambda)$ is the set $V\times [T] = \cup_{1\leq t\leq T} V_t$.

\subsection{\textsc{Temporal Coloring.}}
\label{subec:tc}

A \emph{temporal coloring} of a temporal graph $(G,\lambda)$ 
is a function $\phi : V\times [T] \rightarrow \mathbb{N}$, 
which assigns to every vertex appearance $(v,t)$ in $(G,\lambda)$ 
one color $\phi(v,t)\in\mathbb{N}$. 
For every time slot $t\in [T]$ we denote by~$\phi_{t}$ the restriction of $\phi$ to the 
vertex appearances at time slot $t$; then $\phi_{t}$ is referred to as the \emph{time slot coloring} 
for the time slot $t$. 
That is, $\phi_{t} : V \rightarrow \mathbb{N}$, such that 
$\phi_{t}(v)=\phi(v,t)$, for every $v\in V$. 
Furthermore, for simplicity of the presentation, we will refer to the temporal coloring $\phi$ as 
the ordered sequence $(\phi_{1}, \phi_{2}, \ldots, \phi_{T})$ of all its time slot colorings.
Let $e\in E$ be an edge of the underlying graph $G$. 
We say that an edge $e=\{u,v\}$ of the underlying graph $G$ is \emph{temporally properly colored} at time slot~$t$ 
if \begin{inparaenum}[(i)] \item $\phi_t(u) \neq \phi_t(v)$, 
and \item $t\in \lambda(e)$, i.e.~the edge $e$ is \emph{active} in the time slot $t$. \end{inparaenum} 
We now introduce the notion of a \emph{proper temporal coloring}
and the decision problem \textsc{Temporal Coloring}.

\begin{definition}
\label{def:temp-coloring}
Let $(G,\lambda)$ be a temporal graph with lifetime $T$, where $G=(V,E)$. 
A \emph{proper temporal coloring} of $(G,\lambda)$ is a temporal coloring 
$\phi=(\phi_{1}, \phi_{2}, \ldots, \phi_{T})$ such that every edge $e\in E$ is 
\emph{temporally properly colored} in at least one time slot $t\in \lambda(e)$. 
The \textit{size} of $\phi$ is the total number $|\phi|=|\bigcup_{i=1}^T \phi_i(V)|$ of colors used by $\phi$.
\end{definition}

\vspace{0,1cm} \noindent \fbox{ 
\begin{minipage}{0.97\textwidth}
 \begin{tabular*}{\textwidth}{@{\extracolsep{\fill}}lr} \textsc{Temporal Coloring}
& \\ \end{tabular*}
 
  \vspace{1.2mm}
{\bf{Input:}}  A temporal graph $(G,\lambda)$ with lifetime $T$ and an integer $k\in \mathbb{N}$.\\
{\bf{Question:}} Does there exist a proper temporal coloring $\phi=(\phi_{1}, \phi_{2}, \ldots, \phi_{T})$ of $(G,\lambda)$ using $|\phi|\leq k$ colors?
\end{minipage}} \vspace{0,3cm}

Note that \TempColoring\ is a natural extension of the problem \textsc{Coloring} to temporal graphs. 
In particular, \textsc{Coloring} is the special case of \TempColoring\ where the lifetime of the input temporal graph is $T=1$, and therefore \TempColoring\ is clearly NP-complete.

\subsection{\textsc{Sliding-Window Temporal Coloring.}}
\label{subec:sw-tc}

In the definition of a proper temporal coloring given in Definition~\ref{def:temp-coloring}, 
we require that every edge is properly colored at least once during the whole lifetime $T$ of the temporal 
graph~$(G,\lambda)$. However, in many real-world applications, where $T$ is expected to be arbitrarily large, 
we may need to require that every edge is properly colored more often, and in particular, at least once 
during \emph{every fixed period} $\Delta$ of time, regardless of how large the lifetime $T$ is.

Before we proceed with the formal definition of this problem, we first present some needed terminology. 
For every time slot $t\in \lbrack 1,T-\Delta +1]$, the \emph{$\Delta$-time window} $W_{t}=[t,t+\Delta -1]$ 
is the sequence of the $\Delta $
consecutive time slots $t,t+1,\ldots ,t+\Delta -1$. 
Furthermore we denote by $E[W_{t}]=\bigcup _{i\in W_{t}}E_{i}$ the union of all edges appearing 
at least once in the $\Delta$-time window~$W_{t}$. 
We are now ready to introduce the notion of a \emph{sliding $\Delta $-window temporal coloring} 
and the decision problem \SWTempColoring.

\begin{definition}
\label{def:sliding-Delta-temp-coloring}
Let $(G,\lambda)$ be a temporal graph with lifetime $T$, where $G=(V,E)$, and let $\Delta \leq T$. 
A \emph{proper sliding }$\Delta $\emph{-window temporal coloring} of $(G,\lambda)$ is a temporal coloring 
$\phi=(\phi_{1}, \phi_{2}, \ldots, \phi_{T})$ such that, 
for every $\Delta$-time window $W_{t}$ and for every edge $e\in E[W_{t}]$, 
$e$ is \emph{temporally properly colored} in at least one time slot $t\in W_{t}$.
The \textit{size} of $\phi$ is the total number $|\phi|=|\bigcup_{i=1}^T \phi_i(V)|$ of colors used by $\phi$.
\end{definition}

\vspace{0cm} \noindent \fbox{ 
\begin{minipage}{0.97\textwidth}
 \begin{tabular*}{\textwidth}{@{\extracolsep{\fill}}lr} \textsc{Sliding Window Temporal Coloring} \\ (\SWTempColoring) & \\ \end{tabular*}
 
  \vspace{1.2mm}
{\bf{Input:}}  A temporal graph $(G,\lambda)$ with lifetime $T$, and two integers $k\in\mathbb{N}$ and $\Delta \leq T$.\\
{\bf{Question:}} Does there exist a proper sliding $\Delta$-window temporal coloring $\phi=(\phi_{1}, \phi_{2}, \ldots, \phi_{T})$ of $(G,\lambda)$ using $|\phi|\leq k$ colors?
\end{minipage}} \vspace{0,3cm}

Whenever the parameter $\Delta $ is a fixed constant, 
we will refer to the above problem as the $\Delta $-\SWTempColoring\ 
(i.e.~$\Delta $ is now a part of the problem name). 
Moreover, whenever both $\Delta$ and $k$ are fixed constants, 
we will refer to the problem as the $\Delta $-\SWTempKColoring. 
Note that the problem \TempColoring\ defined above in this section is the 
special case of \SWTempColoring\ where $\Delta =T$, 
i.e.~where there is only one $\Delta$-window in the whole temporal graph. 
Another special case of \SWTempColoring\ is the problem $1$-\SWTempColoring, 
whose solution is obtained by iteratively solving the (static) \textsc{Coloring} problem on each of the $T$ static instances 
of $(G,\lambda)$. 
Thus $1$-\SWTempColoring\ fails to fully capture the time dimension in temporal graphs; 
in the remainder of the paper we will assume that $\Delta\geq 2$.

\subsection{Parameterized complexity.} 
We use standard notation and terminology from parameterized
complexity~\cite{CyganFKLMPPS15}. 
A \emph{parameterized problem} is a language $L\subseteq \Sigma^* \times \mathbb{N}$, where $\Sigma$ is a finite alphabet. We call the second component
the \emph{parameter} of the problem.
A parameterized problem is \emph{fixed-pa\-ram\-e\-ter tractable} (in the complexity class FPT)
if there is an algorithm that solves each instance~$(I, r)$ in~$f(r) \cdot |I|^{O(1)}$ time,
for some computable function $f$. 
A parameterized problem $L$ admits a \emph{polynomial kernel} if there is a polynomial-time algorithm that transforms each instance $(I,r)$ into an instance $(I', r')$ such that $(I,r)\in L$ if and only if $(I',r')\in L$ and~$|(I', r')|\le r^{O(1)}$.

\section{\TempColoring}\label{sec:TC}

In this section we investigate the complexity of \TempColoring. 
We start with our first hardness result.

\begin{theorem}\label{thm:TempKColoringNPhard}
\TempColoring\ is NP-hard even if $k=2$ and each snapshot has a clique and isolated vertices.
\end{theorem}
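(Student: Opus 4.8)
The plan is to reduce from a known NP-hard problem — the most natural choice is \textsc{Not-All-Equal 3-SAT} or, even cleaner, ordinary graph \textsc{3-Coloring} / \textsc{Coloring}, but since we only have $k=2$ colors available in the temporal instance, the right source is a Boolean satisfiability variant. I would reduce from \ONEINTHREESAT\ (or from \THREEFOURSAT), both of which are listed among the paper's macros and are standard NP-hard problems; the monotone 1-in-3 variant is particularly convenient because it has no negations to encode. Given a formula with variables $x_1,\dots,x_n$ and clauses $C_1,\dots,C_m$, I build a temporal graph $(G,\lambda)$ on a vertex set that contains one ``variable vertex'' $v_i$ for each $x_i$ together with a single global ``reference vertex'' $r$ (and whatever auxiliary isolated vertices are needed to pad each snapshot). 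The color of $v_i$ relative to $r$ in a suitable time slot will encode the truth value of $x_i$: same color as $r$ means \textsc{false}, different color means \textsc{true}. With only two colors this is a genuine Boolean encoding.

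The core gadget is the clause gadget. For a clause $C_j = (x_a \vee x_b \vee x_c)$ I devote a block of consecutive time slots to it; in the snapshots of that block the active edges form exactly one clique (plus isolated vertices), as required by the theorem statement. The trick is that ``temporally properly colored at least once during the lifetime'' is an existential-over-time constraint, so a single edge that appears in several snapshots only needs to be bichromatic in \emph{one} of them — this is exactly what lets a small number of cliques simulate a disjunction. Concretely, I would arrange that the edge $\{v_a, r\}$ is active only in the slots where I want to ``read off'' $x_a$, and similarly for $v_b$, $v_c$; then force, via an additional clique on $\{v_a,v_b,v_c\}$ appearing in its own slot, that not all three of them can be on the same side of $r$. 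Getting the ``exactly one'' flavour of 1-in-3 requires a bit more care: I would add gadget vertices so that each of the three $2$-subsets $\{v_a,v_b\}$, $\{v_a,v_c\}$, $\{v_b,v_c\}$ is forced to be \emph{monochromatic} in some window while the triple $\{v_a,v_b,v_c\}$ together with $r$ is forced bichromatic in another window; with two colors, monochromaticity of all three pairs means all three equal, which contradicts the triple constraint unless exactly one differs from $r$ — I will need to tune the windows and the edge-activation pattern so that these constraints are simultaneously satisfiable precisely when exactly one literal is true. Consistency of the truth value of a variable across its several clause-blocks is enforced by making the edge $\{v_i, r\}$ active in a dedicated ``consistency'' slot between blocks, so that if $\phi$ ever flips $v_i$ relative to $r$ it is caught.

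After the construction I would prove the two directions: a satisfying 1-in-3 assignment yields a proper temporal $2$-coloring by coloring each $v_i$ according to its truth value and $r$ with a fixed color in every slot (and coloring the isolated padding vertices arbitrarily), checking that every edge is temporally properly colored in at least one of its active slots; conversely, from any proper temporal $2$-coloring I extract, for each $i$, the ``majority'' or canonical relation of $v_i$ to $r$ and argue the clause constraints force this to be a valid 1-in-3 assignment. Finally I observe the whole reduction is polynomial-time: the number of time slots is $O(m)$ and each snapshot has $O(n)$ vertices and is a single clique plus isolated vertices by design.

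\textbf{Main obstacle.} The delicate point is engineering the clause gadget so that (a) every snapshot is \emph{exactly} one clique plus isolated vertices — not a disjoint union of several edges or cliques — while still encoding a ternary relation, and (b) the ``at least once per lifetime'' semantics together with only $k=2$ colors really pins down the intended disjunction/1-in-3 condition and does not accidentally admit spurious colorings. Balancing the activation windows of the variable-to-$r$ edges against the triple-cliques so that both the consistency constraints and the clause constraints are jointly satisfiable iff the formula is 1-in-3 satisfiable is where the real combinatorial work lies; everything else is bookkeeping.
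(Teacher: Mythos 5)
Your plan has a genuine gap, and it is located exactly where you flag the ``main obstacle'': the gadgets are never constructed, and at least one of the mechanisms you describe cannot work under the semantics of \TempColoring. The problem requires each edge to be properly colored \emph{at least once over the entire lifetime}, in some slot where it is active. Two consequences undermine your plan. First, an edge that appears in a snapshot where it is the only active edge (a $K_2$ plus isolated vertices) imposes \emph{no} constraint at all: you can always properly color it there. So putting the triangle on $\{v_a,v_b,v_c\}$ ``in its own slot'' does force one of its three edges to be saved elsewhere (a $K_3$ cannot be $2$-colored), but the edges $\{v_i,r\}$ placed in dedicated ``read-off'' or ``consistency'' slots are satisfied for free and read off nothing. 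Second, and more fundamentally, there is no way to enforce that $v_i$ has the same color relative to $r$ in two different time slots: the colorings of distinct slots are coupled \emph{only} through the existential requirement on edges, and an edge active in several slots lets the coloring choose which single slot to satisfy it in. Your proposed consistency slot therefore does not ``catch'' a flip of $v_i$. Likewise, nothing in this problem ever forces an edge to be \emph{monochromatic} in a given window; monochromaticity can only be inferred indirectly (e.g., some edge of a $K_3$ in a single snapshot must be monochromatic there), and you have not built the single-clique-per-snapshot gadgets that would realize the ``each pair monochromatic somewhere, triple bichromatic elsewhere'' constraints you need. Until those gadgets exist and the consistency issue is resolved, there is no proof.

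For contrast, the paper avoids all of this by reducing from \textsc{4-Coloring} rather than from SAT. Given $G$ on $n$ vertices, the temporal graph has $G'_1=G'_2=K_n$ and then one extra slot per \emph{non-edge} of $G$ containing only that non-edge (so every snapshot is a single clique plus isolated vertices, and the non-edge slots impose no constraint). A proper temporal $2$-coloring must properly color every edge of $G$ in slot $1$ or slot $2$, i.e.\ it covers $E(G)$ by two spanning bipartite subgraphs, which the paper shows (via the product coloring $\phi(v)=(\phi_1(v),\phi_2(v))$) is equivalent to $G$ being $4$-colorable. This sidesteps variable consistency entirely because the whole constraint lives in two snapshots. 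If you want to salvage a SAT-based route, you would need every constraining edge to appear only inside large cliques and to carry out the window-balancing you deferred; as written, the proposal is a sketch of intent rather than a proof.
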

\begin{proof}
We reduce from \textsc{4-Coloring}. Let $G=(V,E)$ be an instance of \textsc{4-Coloring}. We construct a temporal graph $(G',\lambda)$ as follows:
$G'_1 = G'_2 = K_n$, $T = \binom{|V|}{2}-|E|+2$, and for every non-edge of $G$ there is exactly one time slot $i$ with $3 \leq i\leq T$ where only this edge is present. It is easy to check that every snapshot is a clique together with some isolated vertices. Before we proceed with proving the correctness of our reduction, we first prove the following auxiliary claim.

\begin{claim}\label{claim:bipartitespanning}
A graph $G$ has two bipartite spanning graphs that cover all edges of $G$ if
and only if $G$ is 4-colorable.
\end{claim}
\begin{proofofclaim}[Proof of Claim~\ref{claim:bipartitespanning}]
Assume a given graph $G=(V,E)$ has two bipartite spanning graphs $G_1=(V, E_1)$
and $G_2=(V, E_2)$ that cover all edges $E$ of $G$, i.e.~$E=E_1\cup E_2$. Let
$\phi_i : V \rightarrow \{1,2\}$ be the coloring of $G_i$ for $i\in\{1, 2\}$.
Then $\phi(v) := (\phi_1(v),\phi_2(v))$ is a 4-coloring for $G$: First note
that $|\phi(V)|\le 4$, since $\phi(V)\subseteq\{(1,1),(1,2),(2,1),(2,2)\}$. Now
let $\{v, w\}\in E$. By assumption we have that $\{v, w\}\in E_1\cup E_2$.
Assume $\{v, w\}\in E_1$ (the other case is analogous), then we have that
$\phi_1(v)\neq \phi_1(w)$. It follows that $\phi(v)\neq \phi(w)$.

It remains to show that if a given graph $G=(V,E)$ is 4-colorable, it has two
bipartite spanning graphs that cover all edges of $G$. Let $\phi : V
\rightarrow \{1,2,3,4\}$ be a 4-coloring for $G$. Let $E_1 := \{\{v, w\} \mid
\phi(v)\in\{1,2\} \wedge \phi(w)\in\{3,4\}\}$ and $E_2 := \{\{v, w\} \mid
\phi(v)\in\{1,3\} \wedge \phi(w)\in\{2,4\}\}$. It is easy to check that
$G_1=(V,E_1)$ is bipartite: One part is formed by vertices colored in 1 or
2 and the second part by vertices colored in 3 or 4. By definition $E_1$
does not contain edges between vertices from the same part. The argument
for $G_2=(V,E_2)$ is analogous. They are both spanning graphs since
$E_i\subseteq E$ for $i\in\{1,2\}$. It remains to show that $E=E_1\cup E_2$:
Let $\{v, w\}\in E$ then if $\{\phi(v),\phi(w)\}\in
\{\{1,3\},\{2,3\},\{1,4\},\{2,4\}\}$ then $\{v, w\}\in E_1$, otherwise (if
$\{\phi(v),\phi(w)\}\in \{\{1,2\},\{3,4\}\}$) we have that $\{v, w\}\in E_2$.
\end{proofofclaim}

\medskip

We are now ready to proceed with the proof of correctness of our reduction,
namely, show that the constructed temporal graph can be properly colored with 2
colors if and only if the input graph is 4-colorable.
\begin{itemize}
\item[($\Rightarrow$):] If $G$ is 4-colorable, we can use the 4-coloring of $G$
to 2-color $G'_1$ and $G'_2$ (cf.\ \Cref{claim:bipartitespanning}) and for every
edge that is not present in $G$, color it properly in the snapshot where it is present.
\item[($\Leftarrow$):] If $(G',\lambda)$ is properly colorable with $k=2$
colors, then all edges present in $G$ have to be properly colored either
in~$G_1$ or $G_2$, that is, the edges of $G$ can be covered by two bipartite
graphs, and hence we can 4-color $G$ (cf.\ \Cref{claim:bipartitespanning}).\qedhere
\end{itemize}
\end{proof}

With a more refined reduction, one can also show that \TempColoring\ remains hard even if each snapshot has very few edges.
\begin{theorem}\label{thm:vertexcoverhard}
\TempColoring\ is NP-hard for all $k\ge 2$ even if each snapshot has $O(k^2)$ edges.
\end{theorem}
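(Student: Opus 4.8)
The plan is to refine the reduction behind \cref{thm:TempKColoringNPhard}, whose only dense snapshots are the two copies of $K_{n}$: we must replace this ``two global cliques'' mechanism by many local gadgets, each contributing only $O(k^{2})$ edges to any given snapshot. The starting point is a generalization of \cref{claim:bipartitespanning}: a graph $H$ is $k^{2}$-colorable if and only if $E(H)$ is the union of the edge sets of two spanning $k$-colorable subgraphs. Indeed, writing each of the $k^{2}$ colors as a pair in $[k]\times[k]$, the two coordinates induce two $k$-colorings whose properly-colored (``split'') edges together cover $E(H)$; conversely, if the split edges of two $k$-colorings $\psi_{1},\psi_{2}$ of $V(H)$ together cover $E(H)$, then $v\mapsto(\psi_{1}(v),\psi_{2}(v))$ is a proper $k^{2}$-coloring. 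Hence $k$ temporal colors distributed over two ``phases'' are as powerful as a single $k^{2}$-coloring, and it remains to realize the two phases without ever creating a large snapshot.

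To this end I would reduce from a bounded-occurrence constraint satisfaction problem, for instance \THREEFOURSAT\ (every clause has three literals, every variable occurs in exactly four clauses), or equivalently from a bounded-degree variant of \textsc{$k^{2}$-Coloring}. Each variable would be encoded by a small gadget whose coloring over a constant-length block of time slots represents a truth value, with the color palette pinned down locally by a fresh auxiliary clique on $k$ vertices at each relevant time slot (contributing $\binom{k}{2}=O(k^{2})$ edges, and harmless since permuting colors inside a single snapshot preserves both validity and the color count). Each clause would be encoded by a small gadget --- essentially a $K_{k+1}$-type structure with $\binom{k+1}{2}=O(k^{2})$ edges, which is not $k$-colorable and therefore forces at least one of its edges to be properly colored in some other time slot --- active in only a constant number of time slots. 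Since each variable touches only a constant number of clauses, every snapshot meets only $O(1)$ gadgets and so contains $O(k^{2})$ edges; any remaining edge that must be properly colored somewhere but is irrelevant to a given time slot is handed its own private time slot, exactly as in the proof of \cref{thm:TempKColoringNPhard}.

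The main obstacle is that in \TempColoring\ (where $\Delta=T$) the coupling between distinct time slots is purely existential: an edge only needs to be properly colored in one of the slots in which it is active, so one cannot literally force any gadget to behave consistently over time. I would circumvent this exactly as \cref{claim:bipartitespanning} already does --- rather than forcing the temporal coloring to be a globally consistent assignment, one reconstructs a satisfying assignment from an \emph{arbitrary} valid temporal coloring (reading off, for each variable, the pair of colors its gadget displays across the two phases), and conversely turns any satisfying assignment into a valid temporal coloring. The delicate part, and where essentially all the work lies, is designing the clause gadgets so that they admit a valid temporal coloring precisely when the corresponding clause is satisfied while keeping every snapshot sparse; this is also where the quadratic dependence $O(k^{2})$ on the number of edges per snapshot comes from.
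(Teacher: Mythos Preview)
Your high-level plan lines up with the paper's approach: reduce from \THREEFOURSAT, give each variable and each clause its own constant-size gadget living in its own (few) snapshots, so that every snapshot carries only $O(k^{2})$ edges. Your observation that a $K_{k+1}$-type clause gadget forces one of its edges to be properly colored elsewhere is exactly the role the clause-triangle plays in the paper (for $k=2$).

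However, what you have written is a plan, not a proof, and you say so yourself (``this is also where essentially all the work lies''). The missing piece is precisely the mechanism that couples the snapshots. Your proposal to ``pin down the color palette locally by a fresh auxiliary clique on $k$ vertices'' does not do this: a fresh clique can be colored arbitrarily and imposes no constraint relating a variable's gadget to the clause gadgets that read it. Likewise, the $k^{2}$-coloring / two-phase viewpoint, while correct, does not by itself survive the passage to sparse snapshots, because once the two $K_{n}$'s are gone there is nothing forcing different variables to use compatible ``coordinate systems''.

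The paper's solution to this is concrete and not suggested in your sketch. Four global vertices $w_{1},\dots,w_{4}$ are shared by all gadgets; in every variable snapshot and every (first) clause snapshot a dedicated auxiliary vertex $u$ is adjacent to all four $w_{j}$ by edges that appear \emph{only} in that snapshot, forcing the $w_{j}$ to be monochromatic there. The variable gadget is an $8$-cycle whose $2$-coloring (against the $w_{j}$) encodes the truth value, and the crucial coupling comes from edges between the cycle vertices and the $w_{j}$ that appear in \emph{exactly two} snapshots --- once in the variable's snapshot and once in the relevant clause's snapshot --- so that whichever of them is monochromatic in the variable snapshot must be properly colored in the clause snapshot, and this is what the clause triangle then exploits. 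That ``edge appears in exactly two snapshots'' trick, together with the per-snapshot reference frame provided by the $w_{j}$, is the idea your proposal is missing.
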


\begin{proof}
We present a reduction from \THREEFOURSAT~\cite{tovey1984simplified} to
\textsc{Temporal 2-Coloring}. The reduction can be easily modified to a larger
number of colors, but we omit here the details. Recall that in \THREEFOURSAT\ we
are asked to decide whether a given Boolean formula~$\phi$ is satisfiable and $\phi$ is in conjunctive normal form where every clause has exactly three distinct literals and every variable
appears in exactly four clauses. Given a formula $\phi$ with $n$ variables and
$m$ clauses, we construct a temporal graph~$(G,\lambda)$ consisting of $(n+2m)$
snapshots, that is, one snapshot for each variable gadget and two snapshots for
each clause gadget. An illustration of the construction is given in
\cref{fig:reduction2}.
\begin{figure}[t]
\begin{center}
    \begin{subfigure}{0.45\textwidth}\centering
\scalebox{.85}{
    \begin{tikzpicture}[line width=1pt, scale=.6]
    \node[vert,minimum width=.72cm] (A1) at (0,0) {$w_1$}; 
    \node[vert,minimum width=.72cm] (A2) at (2,0) {$w_2$};
    \node[vert,minimum width=.72cm] (A3) at (4,0) {$w_3$};
    \node[vert,minimum width=.72cm] (A4) at (6,0) {$w_4$};

    \node[vert2,rectangle,fill=lightgray] (U1) at (0,2) {};
    \node[vert2,fill=lightgray] (U2) at (1,2) {};
    \node (U22) at (2,2) {$\ldots$};
    \node[vert2,fill=lightgray] (U3) at (3,2) {};
    \node[vert2,fill=lightgray] (U4) at (4,2) {};
    \node (U42) at (5,2) {$\ldots$};
    \node[vert2,fill=lightgray] (U5) at (6,2) {};
    
    \node[vert2,rectangle,fill=ourblue] (V1) at (-.4,-1.5) {};
    \node[vert2,rectangle,fill=ourblue] (V2) at (.4,-1.5) {};
    \node[vert2,rectangle,fill=ourblue] (V3) at (1.6,-1.5) {};
    \node[vert2,rectangle,fill=ourblue] (V4) at (2.4,-1.5) {};
    \node[vert2,rectangle,fill=ourblue] (V5) at (3.6,-1.5) {};
    \node[vert2,rectangle,fill=ourblue] (V6) at (4.4,-1.5) {};
    \node[vert2,rectangle,fill=ourblue] (V7) at (5.6,-1.5) {};
    \node[vert2,rectangle,fill=ourblue] (V8) at (6.4,-1.5) {};
    
    \draw (U1) -- (A1);
    \draw (U1) -- (A2);
    \draw (U1) -- (A3);
    \path (U1) edge[bend left=5] (A4);
    
    \node (X1) at (-1.5, -1.5) {$x_1$};
    
    \draw[line width=2pt] (V1) -- (A1);
    \draw[line width=2pt] (V2) -- (A1);
    \draw[line width=2pt] (V3) -- (A2);
    \draw[line width=2pt] (V4) -- (A2);
    \draw[line width=2pt] (V5) -- (A3);
    \draw[line width=2pt] (V6) -- (A3);
    \draw[line width=2pt] (V7) -- (A4);
    \draw[line width=2pt] (V8) -- (A4);
    
    \draw (V1) -- (V2);
    \draw (V3) -- (V2);
    \draw (V3) -- (V4);
    \draw (V5) -- (V4);
    \draw (V5) -- (V6);
    \draw (V7) -- (V6);
    \draw (V7) -- (V8);
    \path (V8) edge[bend left] (V1);
    
    \node[vert2,opacity=0] (V12) at (-.4,-2.5) {};
    \node[vert2,opacity=0] (V22) at (.4,-2.5) {};
    \node[vert2,opacity=0] (V32) at (1.6,-2.5) {};
    \node[vert2,opacity=0] (V42) at (2.4,-2.5) {};
    \node[vert2,opacity=0] (V52) at (3.6,-2.5) {};
    \node[vert2,opacity=0] (V62) at (4.4,-2.5) {};
    \node[vert2,opacity=0] (V72) at (5.6,-2.5) {};
    \node[vert2,opacity=0] (V82) at (6.4,-2.5) {};
    
    \node[vert2,opacity=0] (V13) at (-.4,-3.5) {};
    \node[vert2,opacity=0] (V23) at (.4,-3.5) {};
    \node[vert2,opacity=0] (V33) at (1.6,-3.5) {};
    \node[vert2,opacity=0] (V43) at (2.4,-3.5) {};
    \node[vert2,opacity=0] (V53) at (3.6,-3.5) {};
    \node[vert2,opacity=0] (V63) at (4.4,-3.5) {};
    \node[vert2,opacity=0] (V73) at (5.6,-3.5) {};
    \node[vert2,opacity=0] (V83) at (6.4,-3.5) {};

 \begin{pgfonlayer}{bg}   
    \draw[line width=2pt,opacity=0] (V1) -- (A1);
    \draw[line width=2pt,opacity=0] (V2) -- (A1);
    \draw[line width=2pt,opacity=0] (V3) -- (A2);
    \draw[line width=2pt,opacity=0] (V4) -- (A2);
    \draw[line width=2pt,opacity=0] (V5) -- (A3);
    \draw[line width=2pt,opacity=0] (V6) -- (A3);
    \draw[line width=2pt,opacity=0] (V7) -- (A4);
    \draw[line width=2pt,opacity=0] (V8) -- (A4);

    \path[line width=2pt,opacity=0] (V12) edge[bend right=5] (A1);
    \path[line width=2pt,opacity=0] (V22) edge[bend left=5] (A1);
    \path[line width=2pt,opacity=0] (V32) edge[bend right=5] (A2);
    \path[line width=2pt,opacity=0] (V42) edge[bend left=5] (A2);
    \path[line width=2pt,opacity=0] (V52) edge[bend right=5] (A3);
    \path[line width=2pt,opacity=0] (V62) edge[bend left=5] (A3);
    \path[line width=2pt,opacity=0] (V72) edge[bend right=5] (A4);
    \path[line width=2pt,opacity=0] (V82) edge[bend left=5] (A4);

    \path[line width=2pt,opacity=0] (V13) edge[bend left] (A1);
    \path[line width=2pt,opacity=0] (V23) edge[bend right] (A1);
    \path[line width=2pt,opacity=0] (V33) edge[bend left] (A2);
    \path[line width=2pt,opacity=0] (V43) edge[bend right] (A2);
    \path[line width=2pt,opacity=0] (V53) edge[bend left] (A3);
    \path[line width=2pt,opacity=0] (V63) edge[bend right] (A3);
    \path[line width=2pt,opacity=0] (V73) edge[bend left] (A4);
    \path[line width=2pt,opacity=0] (V83) edge[bend right] (A4);
    
    \path[line width=2pt,red,opacity=0] (V1) edge (V42);
    \path[line width=2pt,red,opacity=0] (V73) edge (V42);
    \path[line width=2pt,red,opacity=0] (V1) edge[bend right=55] (V73);
 \end{pgfonlayer}
    
    \end{tikzpicture}}
    \caption{Variable gadget.}\label{fig:reduction2:sub1}
    \end{subfigure}
    \begin{subfigure}{0.45\textwidth}\centering
\scalebox{.85}{
    \begin{tikzpicture}[line width=1pt, scale=.6]   
    \node[vert,minimum width=.72cm] (A1) at (0,0) {$w_1$}; 
    \node[vert,minimum width=.72cm] (A2) at (2,0) {$w_2$};
    \node[vert,minimum width=.72cm] (A3) at (4,0) {$w_3$};
    \node[vert,minimum width=.72cm] (A4) at (6,0) {$w_4$};

    \node[vert2,fill=lightgray] (U1) at (0,2) {};
    \node[vert2,fill=lightgray] (U2) at (1,2) {};
    \node (U22) at (2,2) {$\ldots$};
    \node[vert2,fill=lightgray] (U3) at (3,2) {};
    \node[vert2,fill=lightgray,rectangle] (U4) at (4,2) {};
    \node (U42) at (5,2) {$\ldots$};
    \node[vert2,fill=lightgray] (U5) at (6,2) {};
    
    \node (X1) at (-1.5, -1.5) {$x_1$};
    \node (X1) at (-1.5, -2.5) {$x_2$};
    \node (X1) at (-1.5, -3.5) {$x_3$};
    
    \node[vert2,fill=ourblue,rectangle] (V1) at (-.4,-1.5) {};
    \node[vert2,fill=ourblue,rectangle] (V2) at (.4,-1.5) {};
    \node[vert2,fill=ourblue] (V3) at (1.6,-1.5) {};
    \node[vert2,fill=ourblue] (V4) at (2.4,-1.5) {};
    \node[vert2,fill=ourblue] (V5) at (3.6,-1.5) {};
    \node[vert2,fill=ourblue] (V6) at (4.4,-1.5) {};
    \node[vert2,fill=ourblue] (V7) at (5.6,-1.5) {};
    \node[vert2,fill=ourblue] (V8) at (6.4,-1.5) {};
    
    \node[vert2,fill=lipicsyellow] (V12) at (-.4,-2.5) {};
    \node[vert2,fill=lipicsyellow] (V22) at (.4,-2.5) {};
    \node[vert2,fill=lipicsyellow,rectangle] (V32) at (1.6,-2.5) {};
    \node[vert2,fill=lipicsyellow,rectangle] (V42) at (2.4,-2.5) {};
    \node[vert2,fill=lipicsyellow] (V52) at (3.6,-2.5) {};
    \node[vert2,fill=lipicsyellow] (V62) at (4.4,-2.5) {};
    \node[vert2,fill=lipicsyellow] (V72) at (5.6,-2.5) {};
    \node[vert2,fill=lipicsyellow] (V82) at (6.4,-2.5) {};
    
    \node[vert2,fill=ourgreen!60] (V13) at (-.4,-3.5) {};
    \node[vert2,fill=ourgreen!60] (V23) at (.4,-3.5) {};
    \node[vert2,fill=ourgreen!60] (V33) at (1.6,-3.5) {};
    \node[vert2,fill=ourgreen!60] (V43) at (2.4,-3.5) {};
    \node[vert2,fill=ourgreen!60] (V53) at (3.6,-3.5) {};
    \node[vert2,fill=ourgreen!60] (V63) at (4.4,-3.5) {};
    \node[vert2,fill=ourgreen!60,rectangle] (V73) at (5.6,-3.5) {};
    \node[vert2,fill=ourgreen!60,rectangle] (V83) at (6.4,-3.5) {};
    
    \draw (U4) -- (A1);
    \draw (U4) -- (A2);
    \draw (U4) -- (A3);
    \draw (U4) -- (A4);
 \begin{pgfonlayer}{bg}   
    \draw[line width=2pt] (V1) -- (A1);
    \draw[line width=2pt] (V2) -- (A1);
%     \draw[line width=2pt] (V3) -- (A2);
%     \draw[line width=2pt] (V4) -- (A2);
%     \draw[line width=2pt] (V5) -- (A3);
%     \draw[line width=2pt] (V6) -- (A3);
%     \draw[line width=2pt] (V7) -- (A4);
%     \draw[line width=2pt] (V8) -- (A4);

%     \path[line width=2pt] (V12) edge[bend right=5] (A1);
%     \path[line width=2pt] (V22) edge[bend left=5] (A1);
    \path[line width=2pt] (V32) edge[bend right=5] (A2);
    \path[line width=2pt] (V42) edge[bend left=5] (A2);
%     \path[line width=2pt] (V52) edge[bend right=5] (A3);
%     \path[line width=2pt] (V62) edge[bend left=5] (A3);
%     \path[line width=2pt] (V72) edge[bend right=5] (A4);
%     \path[line width=2pt] (V82) edge[bend left=5] (A4);

%     \path[line width=2pt] (V13) edge[bend left] (A1);
%     \path[line width=2pt] (V23) edge[bend right] (A1);
%     \path[line width=2pt] (V33) edge[bend left] (A2);
%     \path[line width=2pt] (V43) edge[bend right] (A2);
%     \path[line width=2pt] (V53) edge[bend left] (A3);
%     \path[line width=2pt] (V63) edge[bend right] (A3);
    \path[line width=2pt] (V73) edge[bend left] (A4);
    \path[line width=2pt] (V83) edge[bend right] (A4);
    
    \path[line width=2pt,ourred] (V1) edge (V42);
    \path[line width=2pt,ourred] (V73) edge (V42);
    \path[line width=2pt,ourred] (V1) edge[bend right=55] (V73);
    \end{pgfonlayer}
%     \draw (V1) -- (V2);
%     \draw (V3) -- (V2);
%     \draw (V3) -- (V4);
%     \draw (V5) -- (V4);
%     \draw (V5) -- (V6);
%     \draw (V7) -- (V6);
%     \draw (V7) -- (V8);
%     \path (V8) edge[bend left] (V1);
    \end{tikzpicture}}
    \caption{First snapshot of clause gadget.}\label{fig:reduction2:sub2}
    \end{subfigure}
\end{center}
\caption{Illustration of the reduction from \THREEFOURSAT\ to \textsc{Temporal 2-Coloring} of the proof of \cref{thm:vertexcoverhard}. \cref{fig:reduction2:sub1} depicts the variable gadget for $x_1$. \cref{fig:reduction2:sub2} depicts the first snapshot of the clause gadget for clause~$(x_1\vee \lnot x_2 \vee x_3)$, where we have the first appearance of $x_1$ (blue), the second appearance of $x_2$ (yellow), and the fourth appearance of $x_3$ (green). The second snapshot of the clause gadget contains only the red triangle and is not depicted. In both figures vertices corresponding to the remaining variables are not depicted. The rectangular vertices constitute a vertex cover of size at most 9 in the respective snapshots. Thick edges are present in exactly two snapshots and thin edges are present in exactly one snapshot.}
\label{fig:reduction2}
\end{figure}
We start by adding four vertices $w_1$, $w_2$, $w_3$, and $w_4$ which will help
to encode the first, second, third, and fourth appearance of a variable.

\emph{Variable gadget:} For each variable $x_i$ with~$i\in [n]$ of $\phi$ we
create nine vertices $v_{x_i}^{(1)}$, $v_{x_i}^{(2)}$, $\ldots$,
$v_{x_i}^{(8)}$ (which we also refer to as ``the vertices corresponding to
$x_i$''), and $u_{x_i}$ and one new snapshot. In this new snapshot, we connect
$v_{x_i}^{(j)}$ with~$v_{x_i}^{((j\bmod 8)+1)}$ for all $j\in[8]$ and we connect~$v_{x_i}^{(2\ell-1)}$ and
$v_{x_i}^{(2\ell)}$ with $w_\ell$ for all~$\ell\in[4]$. Furthermore, we connect
$u_{x_i}$ with $w_1$, $w_2$, $w_3$, and $w_4$. It is easy to check that every
snasnapshot corresponsing to a variable contains 20 edges.

\emph{Clause gadget:} For each clause $c_i$ with~$1\le i\le m$ of $\phi$ we add
two new snapshots and one new vertex $u_{c_i}$. In the first new snapshot we
connect it with $w_1$, $w_2$, $w_3$, and $w_4$. Let~$x_j$ be a variable that
appears in clause $c_i$ and let this be the $\ell$th appearance of $x_j$ in
$\phi$. Then we connect $w_\ell$ with $v_{x_j}^{(2\ell-1)}$ and
$v_{x_j}^{(2\ell)}$ in the first new snapshot. Lastly, denote $x_{j_1}$,
$x_{j_2}$, and $x_{j_3}$ the three variables in $c_i$ appearing for the
$\ell_1$th, $\ell_2$th, and $\ell_3$th time, respectively, and let~$y_s=1$
if~$x_{j_s}$ appears non-negated in $c_i$ and $y_s=0$ otherwise. We pairwise
connect $v_{x_{j_1}}^{(2\ell_1-y_1)}$, $v_{x_{j_2}}^{(2\ell_2-y_2)}$, and
$v_{x_{j_3}}^{(2\ell_3-y_3)}$ in both the first and the second new snapshot, we
refer to these three vertices as ``the triangle corresponding to clause
$c_i$''. 
It is easy to check that every snapshot corresponding to a clause contains at
most 13 edges.

It is easy to check that the reduction can be computed in polynomial time. It
remains to show that $(G,\lambda)$ admits a proper temporal
2-coloring if and only if $\phi$ is satisfiable.

\emph{($\Rightarrow$):} Assume that we are given a satisfying assignment for
$\phi$. Then we construct a proper temporal 2-coloring for $G$
as follows. Let the two colors be red and blue. Whenever we do not specify the
color of vertices in a certain snapshot, those vertices can be colored
arbitrarily in that snapshot. In each snapshot, we color all vertices $u_{x_i}$
and~$u_{c_j}$ with~$i\in[n]$ and $j\in[m]$ red and vertices $w_1$, $w_2$,
$w_3$, and $w_4$ blue.

Now consider the snapshots corresponding to variable gadgets. If variable $x_i$
is set to true in the satisfying assignment for $\phi$, we color (in the
snapshot corresponding to the variable gadget for $x_i$) vertices
$v_{x_i}^{(2\ell-1)}$ yellow and vertices $v_{x_i}^{(2\ell)}$ blue for
$\ell\in[4]$. Otherwise we color the vertices exacly in the opposite way. This
leaves exactly four edges monochromatic in each snapshot corresponding to a
variable gadget. These will be colored properly in the four clause gadgets
corresponding to the four clauses where the corresponding variable appears.
 
Next, consider the snapshots corresponding to clause gadgets, in particular the
first snapshot corresponding to clause $c_i$. Let $x_1$, $x_2$, and $x_3$ be
the three variables appearing in $c_i$ and w.l.o.g.\ let $x_1$ be contained in
a literal that satisfies the clause and let that be the $\ell$th appearance of
$x_1$. If $x_1$ appears non-negated, we color $v_{x_1}^{(2\ell-1)}$ blue and
all other vertices corresponding to variables $x_1$, $x_2$, and $x_3$ yellow.
Otherwise, we color $v_{x_1}^{(2\ell)}$ blue and all other vertices
corresponding to variables $x_1$, $x_2$, and $x_3$ yellow. Since the literal
containing~$x_1$ satisfies clause $c_i$ we have that the edge between $w_\ell$
and $v_{x_1}^{(2\ell-1)}$ or $v_{x_1}^{(2\ell)}$, respectively, is colored
properly in the snapshot corresponding to the variable gadget of $x_1$. Hence
all edges between $w_1$, $w_2$, $w_3$, $w_4$ and vertices corresponding to
variables $x_1$, $x_2$, and $x_3$ are colored properly. Out of the edges that
form the triangle corresponding to $c_i$ in the snapshot corresponding to
clause $c_i$, exactly one is colored monochromatic.
We color the vertices of the triangle in the \emph{second} snapshot
corresponding to the variable clause of $c_i$ such that exactly that edge is
colored properly. It is easy to verify that this describes a proper temporal
2-coloring for $G$.

\emph{($\Leftarrow$):} Assume we are given a proper temporal
2-coloring for $(G,\lambda)$. Then we construct a satisfying assignment for
$\phi$ in the following way: We start with the observation that in a proper
temporal coloring, vertices $w_1$, $w_2$, $w_3$, and $w_4$ have the same colors
in each snapshot that corresponds to a variable gadget and in each first
snapshot corresponding to a clause gadget. Further, in each snapshot
corresponding to a variable gadget there is a cycle of size eight containing all vertices corresponding to the variable of that gadget. Let that
variable be $x_i$. Since all edges involved in this cycle only exists in this
one snapshot, there are exactly two ways to color this cycle. One of them
leaves the edges between $v_{x_i}^{(2\ell-1)}$ and $w_\ell$ monochromatic for
$\ell\in[4]$. The other way to color the cycle the inverse coloring and leaves
the edges between $v_{x_i}^{(2\ell)}$ and $w_\ell$ monochromatic for
$\ell\in[4]$.
In the first case, we set $x_i$ to false, and in the second case we set $x_i$
to true. We claim that this yields a satisfying assignment for $\phi$.
 
Assume for contradiction that it is not. Then there is a clause that is not
satisfied. Let that clause be $c_i$. We have that in a proper coloring, also
vertices $w_1$, $w_2$, $w_3$, and $w_4$ have the same colors in each first
snapshot that corresponds to a clause gadget. Consider the triangle
corresponding to clause $c_i$ in the first snapshot of the clause gadget of
$c_i$. We have that in a proper temporal coloring, this triangle can not be monochromatic, since otherwise, one of the three edges
is not properly colored in none of the snapshots of the temporal graph. Note that the
triangle edges only exist in the two snapshots corresponding to the clause
gadget of $c_i$ and in the second snapshot, not all three edges can be colored
properly. Hence, in the first snapshot of the clause gadget, at
least one of the vertices of the triangle corresponding to $c_i$ has a
different color than vertices $w_1$, $w_2$, $w_3$, and $w_4$. However, this
means that the corresponding variable is set to a truth value that satisfies
this clause---a contradiction.
\end{proof}

\subsection{Polynomial Kernel for \TempColoring.}
We prove that, given a temporal graph $(G,\lambda)$ 
for \TempColoring\ with $n$ vertices and $T$ time slots, 
we can efficiently compute an equivalent instance $(G',\lambda')$ with $T'\leq m$  time slots, 
where $m$ is the number of edges in~$G$. 
The main idea is that if we have sufficiently many time slots, every edge can be colored in its own time slot and any excess time slots can be removed (as they could be colored arbitrarily).
Formally, \cref{th:TCpolyKernel} shows that \TempColoring\ admits a polynomial kernel when parameterized by the number $n$ of vertices.

\begin{theorem}\label{th:TCpolyKernel}
 	Let $(G,\lambda)$ be a temporal graph of lifetime~$T$. 
 	Then there exists a temporal graph $(G',\lambda') = (G,\lambda)|_{S}$ for some $S \subseteq [T]$, $|S| \leq m=|E(G)|$ 
 	such that for any $k\ge 2$ we have that $(G,\lambda)$ admits a \propTempKColoring\ if and only if $(G',\lambda')$ admits a \propTempKColoring. Furthermore, $(G',\lambda')$ can be constructed in
 	$O(mT \sqrt{m+T})$ time.
\end{theorem}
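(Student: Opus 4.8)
The plan is to design a reduction rule that eliminates ``redundant'' time slots and to show that after applying it exhaustively we are left with at most $m$ time slots. First I would observe the key monotonicity fact: for any $k \ge 2$, if $(G,\lambda)$ admits a \propTempKColoring, then so does any restriction $(G,\lambda)|_S$ with $S \subseteq [T]$, \emph{provided} every edge of $G$ still appears in some slot of $S$ --- indeed we just keep the colors of $\phi$ on the retained slots, and the set of edges that were temporally properly colored only shrinks in the ``fewer constraints'' direction; more precisely, if an edge $e$ was temporally properly colored at some $t \in \lambda(e) \cap S$, the same holds in the restriction. So the only danger in deleting a slot is dropping the last remaining witness for some edge. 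This motivates the following marking procedure: process the edges of $G$ one by one, and for each edge $e$ mark one time slot $t_e \in \lambda(e)$ in which $e$ is, in the relevant coloring, temporally properly colored --- but since we do not have a coloring a priori, we instead argue existentially.

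The cleanest way to phrase it is: suppose $(G,\lambda)$ admits a \propTempKColoring\ $\phi$. For each edge $e$, pick one slot $t_e \in \lambda(e)$ in which $e$ is temporally properly colored under $\phi$; such a slot exists by definition of a proper temporal coloring. Let $S = \{t_e : e \in E\}$, so $|S| \le m$. Then $(G,\lambda)|_S$ admits a \propTempKColoring, namely $\phi|_S$: every edge $e$ has $t_e \in S \cap \lambda(e)$ and is properly colored there. Conversely, if $(G',\lambda') = (G,\lambda)|_S$ admits a \propTempKColoring\ $\phi'$ for some $S$ with every edge appearing in $S$, extend $\phi'$ to all $T$ slots by coloring every vertex with color $1$ in the slots outside $S$; this is still proper for $(G,\lambda)$ since each edge already had a proper witness inside $S$. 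The subtlety is that $S$ here depends on the hypothetical coloring $\phi$, so this does not by itself give an \emph{algorithm} producing a single fixed $S$; it only shows that \emph{some} good $S$ of size $\le m$ exists. To get the constructive statement I would instead define $S$ combinatorially: note that if $T > m$ then by pigeonhole there is a slot $t^\ast$ such that every edge active at $t^\ast$ is also active at some other slot (since $\sum_t |E_t|$ can be large but the number of distinct edges is only $m$ --- more carefully, greedily build $S$ by adding, for each edge, the earliest slot in which it is active, giving $|S| \le m$, and take $S$ to be exactly this set; then every edge appears in $S$, and the equivalence above holds because the forward direction restricts any witness coloring and the backward direction pads with color $1$). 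This fixed, coloring-independent $S$ is what the theorem asserts.

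For the running time, the set $S$ of earliest-appearance slots is computed by one pass over the label lists in $O(mT)$ time (or better). The $O(mT\sqrt{m+T})$ bound, however, suggests the authors actually want the \emph{minimum} such set or a more careful construction --- the $\sqrt{\cdot}$ factor is the signature of bipartite matching (Hopcroft--Karp). So I would instead model slot selection as a covering/matching problem: build a bipartite graph with edges of $G$ on one side, time slots on the other, and an arc $e \sim t$ whenever $t \in \lambda(e)$; the input guarantees every edge has degree $\ge 1$. We want a minimum subset of slots covering all edges --- that is set cover, which is NP-hard, so that cannot be it. Rather, the matching interpretation is: we do not need minimum, we need $\le m$, and any maximal collection of slots each of which is the \emph{unique} new witness for some edge can be extracted via a matching of size $\le m$ between edges and the slots we keep; Hopcroft--Karp on a graph with $m + T$ vertices and $O(mT)$ edges runs in $O(mT\sqrt{m+T})$ time, matching the claimed bound. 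So the algorithmic core is: compute such a matching, let $S$ be the matched slots (size $\le m$), output $(G,\lambda)|_S$.

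\textbf{Main obstacle.} The conceptual content is light; the real work is getting the \emph{constructive} $S$ (independent of any coloring) while keeping the equivalence tight, and matching the stated $O(mT\sqrt{m+T})$ running time, which forces a matching-based rather than greedy construction. The forward direction of the equivalence (restricting a coloring) and the backward direction (padding with a constant color outside $S$) are both routine once one has fixed that every edge of $G$ appears in at least one slot of $S$ --- which is exactly the invariant the matching/greedy construction must maintain. I expect essentially all the difficulty to be in justifying the running time and in phrasing the construction so that $S$ depends only on $(G,\lambda)$, not on a witness coloring.
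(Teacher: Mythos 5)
Your final construction --- keep the time slots matched by a maximum matching in the bipartite edge--slot incidence graph --- is exactly the paper's, but the correctness argument you attach to it is the one you devised for the earlier, coloring-dependent $S$, and it does not transfer. The claim that ``the forward direction restricts any witness coloring'' is false for any coloring-independent $S$: a proper temporal coloring $\phi$ of $(G,\lambda)$ certifies each edge at \emph{some} active slot of its own choosing, and that slot need not lie in $S$, so $\phi|_S$ need not be a \propTempKColoring\ of $(G,\lambda)|_S$. Your fallback greedy choice (earliest active slot of each edge) is in fact an incorrect construction. Counterexample: let $G=K_5$, let every edge be active at slots $1$ and $2$, and let one edge $e^*$ additionally be active at slot $3$. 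The original instance is a yes-instance for $k=2$, since $e^*$ can be properly colored in slot $3$ and $K_5-e^*$ is $4$-colorable, hence coverable by two bipartite spanning subgraphs, one per slot. But every edge's earliest active slot is $1$, so your $S=\{1\}$ and the restriction is a single $K_5$ snapshot, which is not temporally $2$-colorable. So the ``fixed, coloring-independent $S$'' you propose does not satisfy the theorem.

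What is missing is precisely the argument that the \emph{matched} slots suffice, and that is where the paper does its real work. Letting $M$ be a maximum matching, it splits $M$ into $M_1$ (matching edges whose $E$-side endpoints are reachable from unmatched edge-vertices by $M$-alternating paths) and $M_2=M\setminus M_1$. By maximality of $M$ (no augmenting path) one shows $N_B(E_{M_1}\cup\overline{E_M})=S_{M_1}$: every edge of $G$ that is unmatched or lies in $E_{M_1}$ has \emph{all} of its occurrences inside the retained slots, so for these edges the original and restricted instances impose identical constraints; and every edge of $E_{M_2}$ owns a private retained slot in which it can always be properly colored with $2\le k$ colors without affecting anything else. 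Without this K\H{o}nig-type alternating-path analysis you have no proof that the matched slot set preserves yes-instances in the forward direction; contrary to your closing assessment, the difficulty is not in the running time (your Hopcroft--Karp accounting and the backward ``pad with one color'' direction are both fine) but in this correctness argument, which the proposal lacks.
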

\begin{proof} Let $(G,\lambda)$ be a temporal graph with lifetime $T$.
If $T \leq m=|E(G)|$, then we let $(G',\lambda') = (G,\lambda)$ and we are done. From now on we assume that $T>m$.
We define $B_{(G,\lambda)}$ to be the bipartite graph with two 
parts $E=E(G)$ and $[T]$, and the edge set $L = \{ (e,t) ~|~ e \in E, t \in [T], t \in \lambda(e) \}$, that is, $e \in E$ is adjacent
to $t \in [T]$ if and only if $e$ appears in time slot $t$ in $(G,\lambda)$.
Let $M = \{ (e_1,t_1), (e_2,t_2), \ldots, (e_s,t_s) \}$ be a maximum matching in $B = B_{(G,\lambda)}$.
We claim that $(G',\lambda') = (G,\lambda)|_{\{t_1, t_2, \ldots, t_s\}}$ admits a \propTempKColoring\ if and only if $(G,\lambda)$ admits a \propTempKColoring.

	Given a set $M' \subseteq M$ we denote $E_{M'} = \{ e ~|~ (e,t) \in M' \}$ and $S_{M'} = \{ t ~|~ (e,t) \in M' \}$.
	Let $M_1 \subseteq M$ be the set of edges such that every vertex in $E_{M_1}$ is reachable from a vertex in
	$\overline{E_{M}} = E \setminus E_M$ by an $M$-alternating path, i.e. a path whose edges belong alternately to 
	$M$ and not to $M$. Let $M_2 = M \setminus M_1$.

	We claim that $N_{B}(E_{M_1} \cup \overline{E_M}) = S_{M_1}$. First, a vertex $e \in E_{M_1} \cup \overline{E_M}$
	does not have a neighbour in $\overline{S_M} = [T] \setminus S_M$, as otherwise there would exist an $M$-augmenting
	path in $B$, contradicting the maximality of $M$. Also, a vertex $e \in E_{M_1} \cup \overline{E_M}$ is not adjacent
	to a vertex $t_j \in S_{M_2}$, as otherwise the corresponding matching neighbour $e_j$ of $t_j$ would be
	reachable by an $M$-augmenting path from a vertex in $\overline{E_M}$, which would contradict the fact that $(e_j,t_j)$
	belongs to $M_2$.

	The above claim means that those edges of $G$ that are in $E_{M_1} \cup \overline{E_M}$ appear,
	and therefore can be properly colored, only in time slots in $S_{M_1}$.
	Furthermore, all the edges in $E_{M_2}$ can be properly colored with 2 colors in slots in $S_{M_2}$:
	every edge $e \in E_{M_2}$ can be properly colored in the separate time slot $t$, where $(e,t) \in M_2$.
	This implies that that $(G',\lambda')$ admits a \propTempKColoring\ if and only if $(G,\lambda)$ admits a \propTempKColoring, as required.
	
	 We can construct the graph
 	 $B = B_{(G,\lambda)}$ in $O(mT)$ time, and find a maximum matching $M$
 	 in $B$ in $O(mT \sqrt{m+T})$ time~\cite{micali1980v}. 
\end{proof}

\section{\SWTempColoring}\label{sec:SW-TC}

In this section we thoroughly investigate the computational complexity of \SWTempColoring.

\subsection{NP-Hardness.}\label{subsec:SW-TC-hardness}

Before we present our main hardness result for \SWTempColoring, we start with the following intuitive observation. 

\begin{lemma}\label{obs:Delta-monotonicity-coloring}
For every fixed $\Delta$, the problem $(\Delta+1)$-\SWTempKColoring\ is computationally 
at least as hard as the problem $\Delta$-\SWTempKColoring. 
\end{lemma}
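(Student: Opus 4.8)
The plan is to exhibit a polynomial-time many-one reduction from $\Delta$-\SWTempKColoring\ to $(\Delta+1)$-\SWTempKColoring\ that does not change the number $k$ of colors. The guiding idea is to ``dilate'' the time axis by inserting trivial (edgeless) snapshots at regular intervals, so that every $(\Delta+1)$-window of the new instance sees exactly the same edges and the same active time slots as a corresponding $\Delta$-window of the old instance.

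\textbf{Construction.} Given a temporal graph $(G,\lambda)$ with lifetime $T$, I build $(G',\lambda')$ on the same vertex set by writing down the snapshots $G_1,G_2,\ldots,G_T$ in order and inserting one trivial snapshot after every block of $\Delta$ consecutive original snapshots, i.e.\ right after $G_\Delta$, after $G_{2\Delta}$, and so on. The new lifetime is $T'=T+\lfloor T/\Delta\rfloor=O(T)$, the construction is computable in polynomial time, and since $\Delta\le T$ at least one trivial snapshot is inserted, so $\Delta+1\le T'$ as the problem requires.

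\textbf{Window correspondence (the crux).} The key claim is that every $(\Delta+1)$-window $W$ of $(G',\lambda')$ consists of exactly one trivial snapshot together with $\Delta$ snapshots that are consecutive original snapshots $G_t,G_{t+1},\ldots,G_{t+\Delta-1}$, and conversely that every $\Delta$-window $W_t$ of $(G,\lambda)$ arises this way from some $(\Delta+1)$-window of $(G',\lambda')$. For the forward direction: the inserted trivial snapshots sit at positions of $G'$ that are $\Delta+1$ apart, so any $\Delta+1$ consecutive positions contain at most one of them, and since the maximal runs of original snapshots between consecutive trivial ones have length at most $\Delta$, they contain at least one; deleting that trivial snapshot leaves $\Delta$ original snapshots that are consecutive in the list $G_1,\ldots,G_T$ (the trivial snapshot in between is simply skipped). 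For the converse, a window $W_t$ with $t+\Delta-1\le T$ cannot begin inside the last, possibly incomplete, block, and the $\Delta+1$ positions of $G'$ starting at the copy of $G_t$ then cover exactly $G_t,\ldots,G_{t+\Delta-1}$ plus one trivial snapshot. It follows that for corresponding windows $W$ and $W_t$ we have $E[W]=E[W_t]$, and an edge $e$ is active at a slot of $W$ precisely at the copies of the slots of $W_t$ where $e$ is active (trivial snapshots have no active edges). I expect this correspondence near the end of the time axis (the last, possibly incomplete, block) to be the only delicate point; the inequality $t+\Delta-1\le T$ is exactly what rules out the problematic cases.

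\textbf{Equivalence.} Given the correspondence, a temporal coloring $\phi$ of $(G,\lambda)$ extends to a temporal coloring $\phi'$ of $(G',\lambda')$ by copying $\phi$ on the original snapshots and coloring each inserted trivial snapshot monochromatically with a single already-used color, so that $|\phi'|=|\phi|$; conversely, restricting a temporal coloring of $(G',\lambda')$ to the original snapshots gives a coloring of $(G,\lambda)$ with no more colors. Since trivial snapshots impose no constraints and an edge can never be temporally properly colored in them, the window correspondence shows that $\phi$ is a proper sliding $\Delta$-window temporal coloring of $(G,\lambda)$ if and only if $\phi'$ is a proper sliding $(\Delta+1)$-window temporal coloring of $(G',\lambda')$: each $\Delta$-window constraint of $(G,\lambda)$ matches a $(\Delta+1)$-window constraint of $(G',\lambda')$ and vice versa. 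Hence $(G,\lambda)$ is a yes-instance of $\Delta$-\SWTempKColoring\ with $k$ colors if and only if $(G',\lambda')$ is a yes-instance of $(\Delta+1)$-\SWTempKColoring\ with $k$ colors, which proves the claimed reduction.
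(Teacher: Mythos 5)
Your proposal is correct and uses exactly the same reduction as the paper: insert one trivial snapshot after every block of $\Delta$ consecutive original snapshots, so that each $(\Delta+1)$-window of the new instance imposes precisely the constraint of a $\Delta$-window of the old one. The paper dismisses the window-correspondence and equivalence argument as "easy to verify," whereas you spell it out (including the boundary behaviour of the last, possibly incomplete, block), and your details check out.
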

The main idea is the following: 
 given an algorithm $A$ for $(\Delta+1)$-\SWTempKColoring, we can use $A$ to also solve $\Delta$-\SWTempKColoring: we modify the instance of $\Delta$-\SWTempKColoring\ by inserting a trivial snapshot after every $\Delta$ consecutive snapshots, thus obtaining an equivalent instance of $(\Delta+1)$-\SWTempKColoring. 
\begin{proof}[Proof of \Cref{obs:Delta-monotonicity-coloring}]
To see the correctness of \Cref{obs:Delta-monotonicity-coloring}, 
let $\mathcal{A}$ be an algorithm that decides, for a fixed $k\ge 2$, whether an input 
temporal graph $(G,\lambda)$ admits a proper sliding $(\Delta+1)$-window temporal coloring 
of size at most~$k$. 
Then it is easy to verify that $\mathcal{A}$ can be also used to decide whether $(G,\lambda)$ 
admits a \propDeltaTempColoring\ of size at most~$k$, as follows: 
first we amend the input temporal graph $(G,\lambda)$ 
by inserting one trivial snapshot after every~$\Delta$ consecutive snapshots 
and then we apply $\mathcal{A}$ to the resulting temporal graph. 
\end{proof}

Since 1-\SWTempKColoring\ is equivalent to solving $T$ independent instances of static \textsc{$k$-Coloring}, 
\Cref{obs:Delta-monotonicity-coloring} demonstrates that
for any natural~$\Delta$, $\Delta$-\SWTempKColoring\ is at least as hard as \textsc{$k$-Coloring}. 
Thus, if \textsc{$k$-Coloring} is hard on some class~$\mathcal{X}$ of static graphs, 
then $\Delta$-\SWTempKColoring\ is also hard for the class of always $\mathcal{X}$ temporal graphs.

\Cref{thm:hardness,thm:underlyingVChard} below imply that the converse is \emph{not} true. 
In fact, there exist specific classes $\mathcal{X}$ of static graphs 
(graphs whose connected components have size $O(k)$ and graphs whose vertex cover has size $O(k)$, 
respectively) for which \textsc{$k$-Coloring} can be solved in \emph{linear} time 
(for every fixed $k\geq 2$), although 2-\SWTempKColoring\ is NP-hard on always $\mathcal{X}$ temporal graphs.

\begin{theorem}\label{thm:hardness}
Let $k\ge2$. Then 2-\SWTempKColoring\ is NP-hard, even if $T=3$ and:
\begin{compactitem}
  \item the underlying graph is $(k+1)$-colorable, 
  \item the underlying graph has a maximum degree in $O(k)$, and
  \item every snapshot has connected components with size $O(k)$.
\end{compactitem}
\end{theorem}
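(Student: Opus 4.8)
The plan is to reduce from a suitably restricted variant of SAT — most naturally $\ONEINTHREESAT$ (Monotone Exactly 1-in-3 SAT) or $\THREEFOURSAT$, since the structural restrictions in the statement (bounded degree, small connected components per snapshot, $(k+1)$-colorability) force us to keep all gadgets ``local''. I would first reduce the general-$k$ case to the case $k=2$ along the lines of the remark after Theorem~1.10: pad each vertex with a clique of $k-2$ private ``palette'' vertices that appear in every snapshot and are pairwise adjacent and adjacent to the vertex, so that effectively only two colors remain free for each original vertex; this blows up degrees and component sizes only by an additive $O(k)$, keeps the underlying graph $(k+1)$-colorable, and reduces the problem to constructing a temporal graph with $T=3$ that is $2$-SW-temporally colorable iff the formula is satisfiable. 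So the core of the proof is the $k=2$ construction.

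For $k=2$ with $\Delta=2$ and $T=3$, the windows are $W_1=\{1,2\}$ and $W_2=\{2,3\}$; an edge present in $W_1$ must be properly $2$-colored in slot $1$ or slot $2$, and an edge present in $W_2$ must be properly colored in slot $2$ or slot $3$. The key design principle: an edge that appears \emph{only} in slot $2$ lies in both windows and hence must be bi-chromatic in slot $2$ — this gives a rigid ``static'' constraint; an edge appearing only in slot $1$ (resp. only in slot $3$) gives a ``choice'' constraint that can be satisfied either by that slot or by slot $2$. I would build a variable gadget on a constant number of vertices (a short odd cycle or a triangle with pendant structure) whose slot-$2$ edges force it into one of two colorings, encoding the truth value, while slot-$1$/slot-$3$ edges to some shared ``anchor'' vertices propagate the choice. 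Each clause gadget is a triangle on three literal-vertices placed so that its edges live only in slots $1$ and $3$; making a monochromatic triangle impossible over the two available slots forces at least one literal-vertex to disagree with the anchor, which encodes ``the clause is satisfied.'' Crucially, all gadgets must be wired so that each snapshot decomposes into connected components of size $O(1)=O(k)$ and each vertex has degree $O(1)=O(k)$ — this is exactly where the bounded-component and bounded-degree guarantees come from, and it is the reason a SAT variant with bounded occurrences per variable (like $\THREEFOURSAT$) is needed, so that each variable gadget only has to interact with $O(1)$ clause gadgets.

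The steps in order: (i) state the reduction source and recall its NP-hardness; (ii) describe the padding reduction from general $k$ to $k=2$ and check the three structural invariants survive it; (iii) give the $T=3$ variable gadget and verify it has exactly two proper $2$-colorings corresponding to true/false, using the rigidity of slot-$2$-only edges; (iv) give the clause gadget and verify that a proper sliding-window $2$-coloring of the clause triangle over slots $1,3$ exists iff not all three literal-vertices match their anchors; (v) assemble, verify that each of the three snapshots has only $O(1)$-size components and $O(1)$ max degree, and that the union underlying graph is $(k+1)$-colorable (after padding, $3$-colorable before padding — colour the anchors one colour, and two-colour each gadget's remaining vertices, which is possible because the \emph{underlying} graph restricted to a gadget should be designed to be bipartite or at worst $3$-colorable); (vi) prove both directions of the equivalence. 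The main obstacle I anticipate is (iii)–(v) simultaneously: getting the variable gadget to be genuinely rigid (exactly two colorings) using only $T=3$ slots while \emph{also} keeping every snapshot's components of size $O(k)$ and the underlying graph $(k+1)$-colorable is a delicate balancing act — in particular, the ``anchor'' vertices that synchronize truth values across gadgets are naturally high-degree, so they must be replaced by many small local copies whose mutual consistency is itself enforced by slot-$2$-only edges forming short paths rather than a star. Verifying that this localization does not accidentally create a large component or break $(k+1)$-colorability is the crux of the argument.
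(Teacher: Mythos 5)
Your high-level plan matches the paper's: a reduction from \THREEFOURSAT\ with $T=3$, $\Delta=2$, $k=2$, variable gadgets whose slot-2 coloring is forced into one of two states encoding the truth value, and clause triangles that cannot be made monochromatic. However, your central design principle is incorrect, and the error propagates into the clause gadget. Under \Cref{def:sliding-Delta-temp-coloring}, an edge must be \emph{temporally properly} colored within a window, i.e.\ at a slot where it is \emph{active}. Hence an edge active only in slot~1 must be properly colored in slot~1 itself --- it cannot be ``satisfied by slot~2'' as you claim, and the same holds for slot-3-only edges. Every single-slot edge is rigid, not just the slot-2-only ones. The flexibility that makes the reduction nontrivial comes instead from edges active in \emph{all three} snapshots: for such an edge, window $W_1$ can be discharged in slot~1 or~2 and window $W_2$ in slot~2 or~3, and the combinatorial tension arises when several such edges form a triangle or a $C_5$ in some snapshot, forcing at least one of them to defer its obligation to a neighbouring snapshot.

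Concretely, your clause gadget --- a triangle whose edges ``live only in slots 1 and 3'' --- does not work: if each triangle edge is active in both slots 1 and 3 but not 2, then window $W_1$ forces the whole triangle to be properly 2-colored in slot~1 and window $W_2$ forces the same in slot~3, which is impossible, so no instance would ever be a yes-instance; if instead the three edges are distributed between slots 1 and 3, each must simply be properly colored in its own slot, which is always achievable independently and imposes no coupling with the variable gadgets. The paper's clause gadget instead uses a core triangle present in \emph{every} snapshot (so exactly one core edge is forced monochromatic in slot~2), transfers that choice in slot~2 via $C_4$'s onto extension edges that are likewise present in every snapshot, and reads off the variable assignment in slot~3 through short paths of slot-3-only edges attached to variable gadgets rigidified by a $C_5$ in snapshot~3; there are no global anchor vertices at all, so your worry about localizing them is moot. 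You would need to rebuild your gadgets around always-present edges before the rest of your outline (including the otherwise reasonable bounded-degree accounting via bounded variable occurrences) can go through. Finally, your padding argument for general $k$ needs more care in the temporal setting: a clique present in every snapshot need only be properly colored once per window, so it does not automatically block $k-2$ colors at every time slot.
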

\begin{proof}
We present a reduction from \THREEFOURSAT~\cite{tovey1984simplified} to
\textsc{2-SW Temp.\ 2-Coloring}. The reduction can be easily modified to a larger
number of colors, but we omit here the details. Recall that in \THREEFOURSAT\ we are asked
to decide whether a given Boolean formula~$\phi$ is
satisfiable and $\phi$ is in conjunctive normal form where every clause has
exactly three distinct literals and every variable appears in exactly four
clauses. 
Given a formula
$\phi$ with $n$ variables and $m$ clauses, we construct a temporal
graph~$(G,\lambda)$ consisting of three snaphots, which we will refer to as
$G_1=(V,E_1)$, $G_2=(V,E_2)$, and $G_3=(V,E_3)$.
We construct the following variable gadgets and clause gadgets. An illustration
of the construction is given in \cref{fig:reduction}.

\begin{figure}[t]
\begin{center}
\begin{subfigure}{0.3\textwidth}\centering
\scalebox{.85}{
    \begin{tikzpicture}[line width=1pt, scale=.5]  
    \node[minimum width=1.5cm, minimum height=1.3cm, fill=ourblue!40, rounded corners=3mm] (A) at (0,.25) {}; 
    \node[minimum width=1.5cm, minimum height=1.3cm, fill=ourblue!40, rounded corners=3mm] (B) at (0,-2.75) {}; 
    \node[minimum width=1.5cm, minimum height=1.3cm, fill=ourblue!40, rounded corners=3mm] (C) at (0,-5.75) {}; 
    \node[minimum width=3.1cm, minimum height=4.2cm, fill=lipicsyellow!40, rounded corners=3mm] (G) at (5,-2.5) {}; 
    \node[vert] (A1) at (0,1) {\tiny$2$}; 
    \node[vert] (A2) at (-1,0) {\tiny$1$};
    \node[vert] (A3) at (1,0) {\tiny$3$};
    \node[vert2,label={[label distance=-5pt]225:\tiny$5$}] (A4) at (-.5,-.5) {};
    \node[vert2,label={[label distance=-5pt]315:\tiny$4$}] (A5) at (.5,-.5) {};
    \node[vert] (B1) at (0,-2) {};
    \node[vert] (B2) at (-1,-3) {};
    \node[vert] (B3) at (1,-3) {};
    \node[vert2] (B4) at (-.5,-3.5) {};
    \node[vert2] (B5) at (.5,-3.5) {};
    \node[vert] (C1) at (0,-5) {};
    \node[vert] (C2) at (-1,-6) {};
    \node[vert] (C3) at (1,-6) {};
    \node[vert2] (C4) at (-.5,-6.5) {};
    \node[vert2] (C5) at (.5,-6.5) {};
    
    \node[vert] (G1) at (6.5,-1.5) {\tiny$2$}; 
    \node[vert] (G2) at (7.5,-2.5) {\tiny$1$};
    \node[vert] (G3) at (6.5,-3.5) {\tiny$3$};    
    \node[vert,label={[label distance=-3pt]0:\tiny$1,1$}] (G11) at (5,1) {}; 
    \node[vert,label={[label distance=-3pt]0:\tiny$1,2$}] (G12) at (5,0) {};   
    \node[vert,label={[label distance=-3pt]0:\tiny$2,1$}] (G21) at (5,-2) {}; 
    \node[vert,label={[label distance=-3pt]0:\tiny$2,2$}] (G22) at (5,-3) {};   
    \node[vert,label={[label distance=-3pt]0:\tiny$3,1$}] (G31) at (5,-5) {}; 
    \node[vert,label={[label distance=-3pt]0:\tiny$3,2$}] (G32) at (5,-6) {};

    \node[vert2,label={[label distance=-3pt]270:\tiny$1,1,1$}] (A11) at (2.5,1) {};
    \node[vert2,label={[label distance=-3pt]90:\tiny$1,1,2$}] (A12) at (3.5,1) {};
    \node[vert2,label={[label distance=-3pt]270:\tiny$1,2,1$}] (A13) at (3,0) {};
    \node[vert2,label={[label distance=-3pt]270:\tiny$2,1,1$}] (B11) at (2.5,-2) {};
    \node[vert2,label={[label distance=-3pt]90:\tiny$2,1,2$}] (B12) at (3.5,-2) {};
    \node[vert2,label={[label distance=-3pt]270:\tiny$2,2,1$}] (B13) at (3,-3) {};
    \node[vert2,label={[label distance=-3pt]270:\tiny$3,1,1$}] (C11) at (2.5,-5) {};
    \node[vert2,label={[label distance=-3pt]90:\tiny$3,1,2$}] (C12) at (3.5,-5) {};
    \node[vert2,label={[label distance=-3pt]270:\tiny$3,2,1$}] (C13) at (3,-6) {};
    
    \draw[line width=2pt] (A2) -- (A1);
    \draw[line width=2pt] (A3) -- (A1);
    \draw[line width=2pt] (B2) -- (B1);
    \draw[line width=2pt] (B3) -- (B1);
    \draw[line width=2pt] (C2) -- (C1);
    \draw[line width=2pt] (C3) -- (C1);
    
    \draw[line width=2pt] (G1) -- (G2);
    \draw[line width=2pt] (G2) -- (G3);
    \draw[line width=2pt] (G1) -- (G3);
    \draw[line width=2pt] (G11) -- (G12);
    \draw[line width=2pt] (G21) -- (G22);
    \draw[line width=2pt] (G31) -- (G32);
    \end{tikzpicture}}
    \caption{Snapshot one.}\label{fig:reduction:sub1}
    \end{subfigure}
    \begin{subfigure}{0.3\textwidth}\centering
\scalebox{.85}{
    \begin{tikzpicture}[line width=1pt, scale=.5]   
    \node[minimum width=1.5cm, minimum height=1.3cm, fill=ourblue!40, rounded corners=3mm] (A) at (0,.25) {}; 
    \node[minimum width=1.5cm, minimum height=1.3cm, fill=ourblue!40, rounded corners=3mm] (B) at (0,-2.75) {}; 
    \node[minimum width=1.5cm, minimum height=1.3cm, fill=ourblue!40, rounded corners=3mm] (C) at (0,-5.75) {}; 
    \node[minimum width=3.1cm, minimum height=4.2cm, fill=lipicsyellow!40, rounded corners=3mm] (G) at (5,-2.5) {}; 
    \node[vert] (A1) at (0,1) {}; 
    \node[vert] (A2) at (-1,0) {};
    \node[vert] (A3) at (1,0) {};
    \node[vert2] (A4) at (-.5,-.5) {};
    \node[vert2] (A5) at (.5,-.5) {};
    \node[vert] (B1) at (0,-2) {};
    \node[vert] (B2) at (-1,-3) {};
    \node[vert] (B3) at (1,-3) {};
    \node[vert2] (B4) at (-.5,-3.5) {};
    \node[vert2] (B5) at (.5,-3.5) {};
    \node[vert] (C1) at (0,-5) {};
    \node[vert] (C2) at (-1,-6) {};
    \node[vert] (C3) at (1,-6) {};
    \node[vert2] (C4) at (-.5,-6.5) {};
    \node[vert2] (C5) at (.5,-6.5) {};
    
    \node[vert] (G1) at (6.5,-1.5) {}; 
    \node[vert] (G2) at (7.5,-2.5) {};
    \node[vert] (G3) at (6.5,-3.5) {};    
    \node[vert] (G11) at (5,1) {}; 
    \node[vert] (G12) at (5,0) {};   
    \node[vert] (G21) at (5,-2) {}; 
    \node[vert] (G22) at (5,-3) {};   
    \node[vert] (G31) at (5,-5) {}; 
    \node[vert] (G32) at (5,-6) {};

    \node[vert2] (A11) at (2.5,1) {};
    \node[vert2] (A12) at (3.5,1) {};
    \node[vert2] (A13) at (3,0) {};
    \node[vert2] (B11) at (2.5,-2) {};
    \node[vert2] (B12) at (3.5,-2) {};
    \node[vert2] (B13) at (3,-3) {};
    \node[vert2] (C11) at (2.5,-5) {};
    \node[vert2] (C12) at (3.5,-5) {};
    \node[vert2] (C13) at (3,-6) {};
    
    \draw[line width=2pt] (A2) -- (A1);
    \draw[line width=2pt] (A3) -- (A1);
    \draw (A3) -- (A2);
    \draw[line width=2pt] (B2) -- (B1);
    \draw[line width=2pt] (B3) -- (B1);
    \draw (B3) -- (B2);
    \draw[line width=2pt] (C2) -- (C1);
    \draw[line width=2pt] (C3) -- (C1);
    \draw (C3) -- (C2);
    
    \draw[line width=2pt] (G1) -- (G2);
    \draw[line width=2pt] (G2) -- (G3);
    \draw[line width=2pt] (G1) -- (G3);
    \draw[line width=2pt] (G11) -- (G12);
    \draw[line width=2pt] (G21) -- (G22);
    \draw[line width=2pt] (G31) -- (G32);
    \path (G11) edge[bend left] (G2);
    \path (G12) edge[bend left] (G1);
    \path (G21) edge (G1);
    \path (G22) edge (G3);
    \path (G31) edge[bend right] (G3);
    \path (G32) edge[bend right] (G2);
    \end{tikzpicture}}
    \caption{Snapshot two.}\label{fig:reduction:sub2}
    \end{subfigure}
    \begin{subfigure}{0.3\textwidth}\centering
\scalebox{.85}{
    \begin{tikzpicture}[line width=1pt, scale=.5]   
    \node[minimum width=1.5cm, minimum height=1.3cm, fill=ourblue!40, rounded corners=3mm] (A) at (0,.25) {}; 
    \node[minimum width=1.5cm, minimum height=1.3cm, fill=ourblue!40, rounded corners=3mm] (B) at (0,-2.75) {}; 
    \node[minimum width=1.5cm, minimum height=1.3cm, fill=ourblue!40, rounded corners=3mm] (C) at (0,-5.75) {}; 
    \node[minimum width=3.1cm, minimum height=4.2cm, fill=lipicsyellow!40, rounded corners=3mm] (G) at (5,-2.5) {}; 
    \node[vert] (A1) at (0,1) {}; 
    \node[vert] (A2) at (-1,0) {};
    \node[vert] (A3) at (1,0) {};
    \node[vert2] (A4) at (-.5,-.5) {};
    \node[vert2] (A5) at (.5,-.5) {};
    \node[vert] (B1) at (0,-2) {};
    \node[vert] (B2) at (-1,-3) {};
    \node[vert] (B3) at (1,-3) {};
    \node[vert2] (B4) at (-.5,-3.5) {};
    \node[vert2] (B5) at (.5,-3.5) {};
    \node[vert] (C1) at (0,-5) {};
    \node[vert] (C2) at (-1,-6) {};
    \node[vert] (C3) at (1,-6) {};
    \node[vert2] (C4) at (-.5,-6.5) {};
    \node[vert2] (C5) at (.5,-6.5) {};
    
    \node[vert] (G1) at (6.5,-1.5) {}; 
    \node[vert] (G2) at (7.5,-2.5) {};
    \node[vert] (G3) at (6.5,-3.5) {};    
    \node[vert] (G11) at (5,1) {}; 
    \node[vert] (G12) at (5,0) {};   
    \node[vert] (G21) at (5,-2) {}; 
    \node[vert] (G22) at (5,-3) {};   
    \node[vert] (G31) at (5,-5) {}; 
    \node[vert] (G32) at (5,-6) {};

    \node[vert2] (A11) at (2.5,1) {};
    \node[vert2] (A12) at (3.5,1) {};
    \node[vert2] (A13) at (3,0) {};
    \node[vert2] (B11) at (2.5,-2) {};
    \node[vert2] (B12) at (3.5,-2) {};
    \node[vert2] (B13) at (3,-3) {};
    \node[vert2] (C11) at (2.5,-5) {};
    \node[vert2] (C12) at (3.5,-5) {};
    \node[vert2] (C13) at (3,-6) {};
    
    \draw[line width=2pt] (A2) -- (A1);
    \draw[line width=2pt] (A3) -- (A1);    
    \draw (A2) -- (A4);
    \draw (A4) -- (A5);
    \draw (A5) -- (A3);
    \draw[line width=2pt] (B2) -- (B1);
    \draw[line width=2pt] (B3) -- (B1);   
    \draw (B2) -- (B4);
    \draw (B4) -- (B5);
    \draw (B5) -- (B3);
    \draw[line width=2pt] (C2) -- (C1);
    \draw[line width=2pt] (C3) -- (C1);   
    \draw (C2) -- (C4);
    \draw (C4) -- (C5);
    \draw (C5) -- (C3);
    
    \draw[line width=2pt] (G1) -- (G2);
    \draw[line width=2pt] (G2) -- (G3);
    \draw[line width=2pt] (G1) -- (G3);
    \draw[line width=2pt] (G11) -- (G12);
    \draw[line width=2pt] (G21) -- (G22);
    \draw[line width=2pt] (G31) -- (G32);
    
    \draw (G11) -- (A12);
    \draw (A12) -- (A11);
    \draw (A11) -- (A1);    
    \draw (G12) -- (A13);
    \draw (A13) -- (A3);
    
    \draw (G21) -- (B12);
    \draw (B12) -- (B11);
    \draw (B11) -- (B1);    
    \draw (G22) -- (B13);
    \draw (B13) -- (B3);
    
    \draw (G31) -- (C12);
    \draw (C12) -- (C11);
    \draw (C11) -- (C1);    
    \draw (G32) -- (C13);
    \draw (C13) -- (C3);
    \end{tikzpicture}}
    \caption{Snapshot three.}\label{fig:reduction:sub3}
    \end{subfigure}
\end{center}
\caption{Illustration of the reduction from \THREEFOURSAT\ to \textsc{2-SW Temp.\ 2-Coloring} of the proof of \cref{thm:hardness}. Vertices and edges in the yellow shaded areas (right) correspond to a clause gadget for clause~$(x_1\vee x_2 \vee x_3)$. Vertices and edges in the blue shaded areas (left) correspond to the variable gadgets for $x_1$, $x_2$, and $x_3$. Thick edges appear in every snapshot while thin edges only appear in one snapshot. In the first snapshot~(\subref{fig:reduction:sub1}), the superscripts of the vertices used in the proof of \cref{thm:hardness} are shown. To keep the figure clean, those are omitted in the illustrations for snapshots two (\subref{fig:reduction:sub2}) and three (\subref{fig:reduction:sub3}).}
\label{fig:reduction}
\end{figure}

\emph{Variable gadget:} For each variable $x_i$ with~$1\le i\le n$ of $\phi$ we create five vertices $v_{x_i}^{(1)}$, $v_{x_i}^{(2)}$, $v_{x_i}^{(3)}$, $v_{x_i}^{(4)}$, and $v_{x_i}^{(5)}$. The vertices $v_{x_i}^{(1)}$, $v_{x_i}^{(2)}$, and $v_{x_i}^{(3)}$ form a (not necessarily induced)~$P_3$ in every snapshot, that is $\{v_{x_i}^{(1)}, v_{x_i}^{(2)}\}\in E_t$ and $\{v_{x_i}^{(2)}, v_{x_i}^{(3)}\}\in E_t$ for all~$1\le t\le 3$. Furthermore, we connect $v_{x_i}^{(1)}$ and $v_{x_i}^{(3)}$ in the second snapshot, that is,~$\{v_{x_i}^{(1)}, v_{x_i}^{(3)}\}\in E_2$. Lastly, we create a full $C_5$ in snapshot three, that is, $\{v_{x_i}^{(3)}, v_{x_i}^{(4)}\}\in E_3$, $\{v_{x_i}^{(4)}, v_{x_i}^{(5)}\}\in E_3$, and~$\{v_{x_i}^{(1)}, v_{x_i}^{(5)}\}\in E_3$.

\emph{Clause gadget:} For each clause $c_i$ with~$1\le i\le m$ of $\phi$ we create a total of 18 vertices. We create vertices $v_{c_i}^{(1)}$, $v_{c_i}^{(2)}$, and $v_{c_i}^{(3)}$ and connect them to a triangle in every snapshot, that is, $\{v_{c_i}^{(1)}, v_{c_i}^{(2)}\}\in E_t$, $\{v_{c_i}^{(2)}, v_{c_i}^{(3)}\}\in E_t$, and $\{v_{c_i}^{(1)}, v_{c_i}^{(3)}\}\in E_t$ for all~$1\le t\le 3$. In this proof, we refer to these vertices as the \emph{core} of the clause gadget of clause $c_i$. Next, we add six vertices, which we refer to as the \emph{extension} of the core of the clause gadget of clause $c_i$. Let these vertices be called $v_{c_i}^{(1,1)}$, $v_{c_i}^{(1,2)}$, $v_{c_i}^{(2,1)}$, $v_{c_i}^{(2,2)}$, $v_{c_i}^{(3,1)}$, and $v_{c_i}^{(3,2)}$. We connect $v_{c_i}^{(j,1)}$ and~$v_{c_i}^{(j,2)}$ for all $1\le j\le 3$ in every snapshot, that is, $\{v_{c_i}^{(j,1)}, v_{c_i}^{(j,2)}\}\in E_t$ for all~$1\le j\le 3$ and for all~$1\le t\le 3$. In the second snapshot, we connect the extension and the core in the following way.
\begin{compactitem}
\item Edge $\{v_{c_i}^{(1,1)},v_{c_i}^{(1,2)}\}$ forms a $C_4$ with edge $\{v_{c_i}^{(2)}, v_{c_i}^{(1)}\}$, that is, $\{v_{c_i}^{(2)}, v_{c_i}^{(1,2)}\}\in E_2$ and $\{v_{c_i}^{(1)}, v_{c_i}^{(1,1)}\}\in E_2$.
\item Edge $\{v_{c_i}^{(2,1)},v_{c_i}^{(2,2)}\}$ forms a $C_4$ with edge $\{v_{c_i}^{(2)}, v_{c_i}^{(3)}\}$, that is, $\{v_{c_i}^{(2)}, v_{c_i}^{(2,1)}\}\in E_2$ and $\{v_{c_i}^{(3)}, v_{c_i}^{(2,2)}\}\in E_2$.
\item Edge $\{v_{c_i}^{(3,1)},v_{c_i}^{(3,2)}\}$ forms a $C_4$ with edge $\{v_{c_i}^{(1)}, v_{c_i}^{(3)}\}$, that is, $\{v_{c_i}^{(1)}, v_{c_i}^{(3,2)}\}\in E_2$ and $\{v_{c_i}^{(3)}, v_{c_i}^{(3,1)}\}\in E_2$.
\end{compactitem}
Lastly, we introduce nine auxiliary vertices that help to connect clause gadgets and variable gadgets. Let these vertices be called $v_{c_i}^{(j,1,1)}$, $v_{c_i}^{(j,1,2)}$, and $v_{c_i}^{(j,2,1)}$ for all $1\le j\le 3$. In the third snapshot, we connect the extension of the core and these auxiliary vertices in the following way. For all $1\le j\le 3$ we have that $\{v_{c_i}^{(j,1,1)}, v_{c_i}^{(j,1,2)}\}\in E_3$, $\{v_{c_i}^{(j,1,2)}, v_{c_i}^{(j,1)}\}\in E_3$, and $\{v_{c_i}^{(j,2,1)}, v_{c_i}^{(j,2)}\}\in E_3$.

\emph{Connection of variable and clause gadgets:} The clause gadgets and variable gadgets are connected in the third snapshot. Let clause $c_i=(\ell_{i,1}\vee\ell_{i,2}\vee\ell_{i,3})$ with $1\le i\le m$ have literals~$\ell_{i,1}$, $\ell_{i,2}$, and $\ell_{i,3}$. Let $x_{i,j}$ with $1\le i\le m$ and $1\le j\le 3$ be the variable of the $j$th literal in clause $c_i$. If $\ell_{i,j}=x_{i,j}$, then~$\{v_{x_{i,j}}^{(2)}, v_{c_i}^{(j,1,1)}\}\in E_3$ and~$\{v_{x_{i,j}}^{(3)}, v_{c_i}^{(j,2,1)}\}\in E_3$. If~$\ell_{i,j}=\lnot x_{i,j}$, then~$\{v_{x_{i,j}}^{(1)}, v_{c_i}^{(j,1,1)}\}\in E_3$ and~$\{v_{x_{i,j}}^{(2)}, v_{c_i}^{(j,2,1)}\}\in E_3$.

This completes the construction. Recall that $\Delta=2$ and $k=2$. It is easy
to check that the reduction can be computed in polynomial time. It remains to
show that $(G,\lambda)$ admits a proper sliding 2-window temporal 2-coloring if
and only if $\phi$ is satisfiable.

\emph{($\Rightarrow$):} Assume that we are given a satisfying assignment for
$\phi$. Then we construct a proper sliding 2-window temporal 2-coloring for $G$
as follows. We start coloring the second snapshot and then show that we can
color snapshots one and three in a way such that the complete coloring is a
proper sliding 2-window temporal 2-coloring. If a variable $x_i$ with $1\le
i\le n$ is set to true in the satisfying assignment, then we color the triangle
of the corresponding variable gadget in a way that leaves only 
edge~$\{v_{x_i}^{(1)}, v_{x_i}^{(2)}\}$ monochromatic. To be specific, assume
(for the remainder of this paragraph) we have colors yellow and blue, we color
vertices $v_{x_i}^{(1)}$ and $v_{x_i}^{(2)}$ in yellow and
vertices~$v_{x_i}^{(3)}$, $v_{x_i}^{(4)}$, and $v_{x_i}^{(5)}$ in blue. If
variable $x_i$ is set to false in the satisfying assignment, then we color the
triangle of the corresponding variable gadget in a way that leaves
edge~$\{v_{x_i}^{(2)}, v_{x_i}^{(3)}\}$ monochromatic. To be specific, we color
vertices $v_{x_i}^{(2)}$ and $v_{x_i}^{(3)}$ in yellow and vertices
$v_{x_i}^{(1)}$, $v_{x_i}^{(4)}$, and $v_{x_i}^{(5)}$ in blue. For each clause
$c_i$ with $1\le i\le m$ we choose one of its literals that satisfies the
clause. Let the $j$th literal with $1\le j\le3$ be a satisfying literal of
clause $c_i$ for the given assignment. Then we color the core of the
corresponding clause gadget in a way that leaves edge
$\{v_{c_i}^{(j)},v_{c_i}^{(j\bmod 3+1)}\}$ monochromatic. Note coloring the
core uniquely determines how we have to color the extension of the core since
the connecting edges are only present in the second snapshot and hence have to
be properly colored. The auxiliary vertices can be colored arbitrarily.

Now we show how to color snapshot one. For each variable $x_i$ with $1\le i\le n$, we color~$v_{x_i}^{(2)}$ in yellow and the remaining vertices of the corresponding gadget in blue. Note that this ensures that the edge which remains monochromatic in the second snapshot is properly colored in the first snapshot. For each clause $c_i$ with $1\le i\le m$ we color the core in a way that ensures that the edge which remains monochromatic in the second snapshot is properly colored in the first snapshot. We properly color all edges of the extension and the auxiliary vertices arbitrarily. It is not hard to see that now the first $\Delta$-window is properly colored.

Lastly, we show how to color the third snapshot. Note that for the variable
gadgets, the coloring in snapshot two determines (up to switching the colors)
how to color the variable gadgets in the third snapshot. This also determines how
to color the auxiliary vertices and the extension of the core in the third snapshot. 
This potentially leaves edges of the extension monochromatic. Note that in
the second snapshot, all extension edges are properly colored except the one
which, in the third snapshot, is connected to a variable that, in the given
assignment, satisfies the clause. It is straightforward to check that in this
case, this particular extension edge is properly colored in the third snapshot.
Lastly, the core is colored in a way that ensures that the edge that is colored
monochromatic in the second snapshot is colored properly in the third snapshot. 
It is easy to check that now the second $\Delta$-window is also properly colored.

\emph{($\Leftarrow$):} Assume we are given a proper sliding 2-window temporal
2-coloring for $(G,\lambda)$. Then we construct a satisfying assignment for
$\phi$ in the following way: Note that in the second snapshot each variable
gadget contains a triangle with exactly one monochromatic edge. The edge
$\{v_{x_i}^{(1)},v_{x_i}^{(3)}\}$ only exists in the second snapshot and hence
is colored properly by any proper sliding 2-window temporal 2-coloring. This
means that either edge $\{v_{x_i}^{(1)},v_{x_i}^{(2)}\}$ or
edge~$\{v_{x_i}^{(2)},v_{x_i}^{(3)}\}$ is colored monochromatic. If
$\{v_{x_i}^{(1)},v_{x_i}^{(2)}\}$ is colored monochromatic then we set~$x_i$ to
true, otherwise we set $x_i$ to false. We claim that this yields a satisfying
assignment for~$\phi$. Assume for contradiction that it is not. Then there is a
clause $c_j$ that is not satisfied. Without loss of generality, let $x_1$,
$x_2$, and $x_3$ be the variables appearing in $c_j$. Then in the third
snapshot, the clause gadget of $c_j$ is connected to the variable gadgets
of~$x_1$,~$x_2$, and~$x_3$. It is easy to check that in any proper sliding 2-window temporal 2-coloring, exactly one edge of the
extension of any clause gadget is colored monochromatic in the second snapshot,
hence this is also the case in the clause gadget of $c_j$. Without loss of
generality, let the monochromatically colored (in the second snapshot)
extension edge of the clause gadget of~$c_j$ be connected to the variable
gadget of $x_1$ in the third snapshot. It is easy to check that for the sliding
2-window temporal 2-coloring to be proper, the edge of the variable gadget of
$x_1$ that is connected to the clause gadget of $c_j$ in the third snapshot
needs to be colored properly in the second snapshot. By construction of
$(G,\lambda)$ this is a contradiction to $c_j$ not being satisfied by the
constructed assignment. 
\end{proof}

With small modifications to the reduction we get that \SWTempColoring\ remains hard under the following restrictions on the snapshots. 

\begin{cor}
\SWTempColoring\ is NP-hard for all $k\ge 2$, $\Delta\ge c$, and $T\ge \Delta+1$ for some constant $c$ even~if
\begin{compactitem}
\item every snapshot is a cluster graph, or
\item every snapshot has a dominating set of size one.
\end{compactitem}
\end{cor}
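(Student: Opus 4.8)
The plan is to obtain both parts by small modifications of the reduction from \THREEFOURSAT\ used in the proof of \cref{thm:hardness}, combined with the $\Delta$-monotonicity of \Cref{obs:Delta-monotonicity-coloring}. Recall that that reduction produces, for any fixed $k\ge 2$, a temporal graph with $T=3$ snapshots in which every connected component of every snapshot has size $O(k)$. To pass from some fixed small value of $\Delta$ to an arbitrary $\Delta\ge c$ and an arbitrary lifetime $T\ge \Delta+1$, I would iterate \Cref{obs:Delta-monotonicity-coloring} (which inserts a trivial snapshot after every block of $\Delta$ consecutive snapshots) and then append further ``neutral'' snapshots at the end; below I make sure that the inserted and appended snapshots also respect the structural restriction in question, so that this bookkeeping is harmless.

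\emph{Dominating set of size one.} I would add a single new vertex $z$ and make it adjacent to \emph{every} other vertex in \emph{every} snapshot; then $\{z\}$ dominates every snapshot, so the restriction is met automatically, and the only thing to check is that adding $z$ does not change the answer. First apply \Cref{obs:Delta-monotonicity-coloring} to the instance of \cref{thm:hardness} to raise $\Delta$ from $2$ to $3$ (so $c=3$); after adding $z$, each formerly trivial snapshot becomes a star centred at $z$, still dominated by $\{z\}$, and by the insertion pattern every $\Delta$-window now contains at least one such star snapshot while still containing a full copy of each original $2$-window of \cref{thm:hardness}. For the forward direction, given a satisfying assignment I colour the original vertices in the original (non-star) snapshots exactly as in the proof of \cref{thm:hardness}, colour $z$ arbitrarily there, and in each star snapshot colour $z$ with colour $1$ and all other vertices with colour $2$; this properly colours every edge incident to $z$ in every star snapshot, hence in every $\Delta$-window, while every original edge is properly coloured exactly as before. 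For the backward direction I restrict a proper sliding $\Delta$-window $k$-colouring to the original vertices and original snapshots; since an original edge is never active in a star snapshot, it must already be properly coloured inside the corresponding original $2$-window, so the restriction is a proper sliding $2$-window $k$-colouring of the instance of \cref{thm:hardness}, from which the assignment is read off as there. Larger $\Delta$ is obtained by more insertions, and larger $T$ by appending further star snapshots at the end (always colourable by the ``$z\mapsto 1$, everything else $\mapsto 2$'' rule), keeping every window dominated by $\{z\}$.

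\emph{Cluster graph.} Here I would instead redesign the variable and clause gadgets of \cref{thm:hardness} so that every connected component of every snapshot is a clique (necessarily of size $O(k)$). Several gadget pieces are already cliques or disjoint unions of cliques in each snapshot --- the clause core triangles, the extension edges, and the variable triangle in the second snapshot --- and can be kept verbatim; the problematic substructures are the variable $P_3$ in the first and third snapshots, the variable $C_5$ in the third snapshot, the $C_4$'s attaching the clause extensions to the core in the second snapshot, and the paths connecting clause gadgets to variable gadgets in the third snapshot. Each of these I would replace either by a short sequence of cluster-graph snapshots carrying the same logical constraint, or by a dedicated small clique gadget; since the original gadgets have components of size $O(k)$, only a constant (depending on $k$) number of extra snapshots per original snapshot is needed, so $c$ stays constant, and the remaining ranges of $\Delta$ and $T$ are handled exactly as above, using trivial snapshots, which are themselves cluster graphs.

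\emph{Main obstacle.} The genuinely delicate part is the cluster-graph case: the backward direction of \cref{thm:hardness} relies on the precise cycle/path structure of the second and third snapshots (for instance that a $C_5$ is not $2$-colourable and that the three $C_4$-type constraints in a clause gadget interact in a specific way), so once these structures are broken up over several cluster-graph snapshots one must add enough always-present clique edges (or small synchronising gadgets) to force the colouring to behave consistently across the sub-snapshots of a single original snapshot, and then re-verify both directions of the equivalence with the new gadgets. The dominating-set-of-size-one case, by contrast, should go through essentially as stated once the universal vertex $z$ and the star snapshots are set up as above.
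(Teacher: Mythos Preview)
Your treatment of the dominating-set case is essentially the paper's own argument: add one universal vertex $z$, and insert extra snapshots between the second and third original snapshot in which only the $z$-edges are present; these extra snapshots allow all $z$-edges to be properly coloured without disturbing the original $2$-window argument. Framing the insertion through \Cref{obs:Delta-monotonicity-coloring} is a cosmetic difference. (One small caveat: appending further star snapshots \emph{after} $G_3$ to inflate $T$ is not entirely innocent, since a window that contains $G_3$ but not $G_2$ would force every edge of $G_3$ to be properly coloured inside $G_3$ alone, which the forward-direction colouring of \cref{thm:hardness} does not guarantee. This is easy to repair, but not by appending stars; the paper's sketch is equally silent on this point.)

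The real gap is in the cluster-graph case. You propose to \emph{redesign} the variable and clause gadgets so that every component of every snapshot is already a clique, and you correctly flag this as delicate --- the odd cycles and $C_4$-type constraints in the second and third snapshot are doing real work in the backward direction, and spreading them over several cluster snapshots while keeping the equivalence tight is genuinely nontrivial, and you do not carry it out. The paper avoids all of this with a much cheaper trick that uses the $O(k)$ bound on component sizes from \cref{thm:hardness}: leave the gadgets untouched, but in each of the three original snapshots add the missing edges that complete every connected component to a clique. Since each component has constant size, only a constant number of new edges is added per component. Now insert, between the second and third original snapshot, a constant number of fresh snapshots, each consisting of one new edge per component (a matching, hence a cluster graph), so that every new edge appears in at least one of these fresh snapshots. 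Increase $\Delta$ by the number of inserted snapshots. In the forward direction the original colouring of $G_1,G_2,G_3$ is kept verbatim and the new edges are properly coloured in their dedicated fresh snapshots; in the backward direction, an original edge is never active in any fresh snapshot, so it must already be properly coloured inside the corresponding original $2$-window, and the argument of \cref{thm:hardness} goes through unchanged. No gadget surgery is needed.
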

\begin{proof}[Proof Sketch]
Both modifications rely on the following construction. We can insert additional
vertices and edges to each of the three snapshots of the reduction presented in
the proof of \cref{thm:hardness}. Between snapshots two and three we add sufficiently new
snapshots containing exclusively new edges, such that all
new edges can be colored properly at least once if $\Delta$ is increased by the
number of new snapshots. Now all the new edges can be colored properly in the
newly inserted snapshots and the original construction of the reduction is not
affected.

To get the first property for all snapshots, we can add all edges that tranform
each connected component into a clique to the three original snapshots. Since
the components have constant size we only add a constant number of new edges per
component. Hence, we can add a constant number of new snapshots each containing
one new edge per component. The newly added snapshots are clearly cluster
graphs, hence we get the result.

To get the second property, we and one new universal vertex and one new snapshot
that contains all new edges. Clearly, now all snapshots have a dominating set of
size one.
\end{proof}

The reduction presented in the proof of \cref{thm:hardness} also yields a running time lower bound assuming the Exponential Time Hypothesis (ETH)~\cite{ImpagliazzoP01,ImpagliazzoPZ01}.

\begin{cor}\label{cor:ETHLB}
\SWTempColoring\ does not admit a $k^{o(n)\cdot f(T+k)}$-time algorithm for any computable function~$f$ unless ETH fails.
\end{cor}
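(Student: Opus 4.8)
The plan is to trace through the reduction from \THREEFOURSAT\ given in the proof of \cref{thm:hardness} and extract the parameter bounds, then invoke the known ETH lower bound for \textsc{3-SAT}. Recall that, assuming ETH, \textsc{3-SAT} (and hence \THREEFOURSAT, which is \textsc{NP}-hard via a linear-size reduction from \textsc{3-SAT} that blows up the instance only by a constant factor) cannot be solved in $2^{o(N)}$ time, where $N$ is the number of variables plus clauses; more precisely, there is no $2^{o(N)}$-time algorithm for \THREEFOURSAT. Since in a \THREEFOURSAT\ instance every variable occurs in exactly four clauses and every clause has exactly three literals, we have $3m = 4n$, so $n$ and $m$ are within a constant factor of each other and we may as well phrase the lower bound as ``no $2^{o(n)}$-time algorithm for \THREEFOURSAT''.

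First I would count the vertices produced by the reduction of \cref{thm:hardness}. The variable gadget for each $x_i$ contributes $5$ vertices, and the clause gadget for each $c_j$ contributes $18$ vertices ($3$ core $+$ $6$ extension $+$ $9$ auxiliary). Hence the constructed temporal graph $(G,\lambda)$ has $|V| = 5n + 18m = O(n+m) = O(n)$ vertices, and its lifetime is $T=3$, with $k=2$. Next I would observe that the reduction preserves yes/no instances: $\phi$ is satisfiable if and only if $(G,\lambda)$ admits a proper sliding $2$-window temporal $2$-coloring.

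Now suppose, for contradiction, that \SWTempColoring\ admits an algorithm running in time $k^{o(n)\cdot f(T+k)}$ for some computable function $f$. Apply this algorithm to the instance $(G,\lambda)$ produced above. Since $k=2$ and $T=3$ in this instance, $f(T+k)=f(5)$ is a fixed constant, and the number of vertices of $(G,\lambda)$ is $n' := 5n+18m = \Theta(n)$. Thus the running time is $2^{o(n')\cdot f(5)} = 2^{o(n')} = 2^{o(n)}$ (absorbing the constant $f(5)$ into the little-oh and using $n' = \Theta(n)$). Combined with the polynomial-time computability of the reduction, this yields a $2^{o(n)}$-time algorithm deciding satisfiability of \THREEFOURSAT, contradicting ETH. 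Hence no such algorithm for \SWTempColoring\ exists unless ETH fails.

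The only genuinely delicate point is bookkeeping: one must make sure that the factor $f(T+k)$ in the claimed running time is harmless, which it is precisely because the hard instances have $T$ and $k$ both bounded by absolute constants ($T=3$, $k=2$), so $f(T+k)$ is a constant that can be swallowed by the $o(\cdot)$. I would also remark that the statement is tight in the sense that the constructed instances are indeed the ``worst case'' for small $n$: the brute-force algorithm over all colorings of all $T$ snapshots runs in $k^{nT}$ time, matching the form $k^{O(n)\cdot O(T)}$, so the corollary shows this cannot be improved to subexponential dependence on $n$ even for constant $T$ and $k$. No further machinery is needed beyond the reduction of \cref{thm:hardness} and the standard ETH lower bound for \textsc{3-SAT}.
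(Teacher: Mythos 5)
Your proposal is correct and follows essentially the same route as the paper's own proof: run the reduction of \cref{thm:hardness}, note that it produces instances with $O(n+m)$ vertices, $T=3$, and $k=2$, observe that $f(T+k)=f(5)$ is then an absorbable constant, and conclude that a $k^{o(n)\cdot f(T+k)}$-time algorithm would yield a $2^{o(m)}$-time algorithm for 3SAT via the linear-size Tovey transformation, contradicting ETH. Your version merely adds the explicit vertex count ($5n+18m$) and the remark on tightness against the brute-force $k^{nT}$ algorithm, which the paper omits.
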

\begin{proof}
First, note that any 3SAT formula with $m$ clauses can be transformed into an equisatisfiable \THREEFOURSAT\ formula with $O(m)$ clauses~\cite{tovey1984simplified}. The reduction presented in the proof of \cref{thm:hardness} produces an instance of \SWTempColoring\ with $n=O(m)$ vertices, $k=2$, and $T=3$. Hence an algorithm for \SWTempColoring\ with running time $k^{o(n)\cdot f(T+k)}$ for some computable function~$f$ would imply the existence of an $2^{o(m)}$-time algorithm for 3SAT. This is a contradiction to ETH~\cite{ImpagliazzoP01,ImpagliazzoPZ01}.
\end{proof}

\subsection{Optimal Exponential-Time Algorithm Assuming ETH.}

\label{subsec:SW-TC-positive}

In the following we give an exponential-time algorithm for $\Delta$-\SWTempColoring\ that asymptotically matches the running time lower bound given in \cref{cor:ETHLB} assuming the ETH.

We start with an auxiliary technical observation that allows us to combine partial colorings if they agree on their overlap.

\begin{observation}\label{lem:overlap}
Let $(G,\lambda)=(G_1, G_2, \ldots, G_i)$ and $(G',\lambda')=(G_j, G_{j+1}, \ldots, G_\ell)$ be two instances of $\Delta$-\SWTempKColoring\ defined over the same vertex set $V$ with $j+\Delta-1\le i\le\ell$. Let $\phi$ and $\psi$ be \DeltaTempKColoring s for $(G,\lambda)$ and $(G',\lambda')$, respectively, with the property that for all $v\in V$ and for all $j\le i^\star\le i$ we have that $\phi(v, i^\star)=\psi(v,i^\star)$.

We have that $\phi$ and $\psi$ are \propDeltaTempKColoring s if and only if $(\phi_1, \phi_2, \ldots \phi_i, \psi_{i+1} \ldots\psi_\ell)$ is a \propDeltaTempKColoring\ for $(G^\star,\lambda^\star)=(G_1, G_2, \ldots, G_\ell)$.
\end{observation}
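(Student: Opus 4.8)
The plan is to unfold the definition of a proper sliding $\Delta$-window temporal coloring and argue window by window, using the fact that every $\Delta$-window of $(G^\star,\lambda^\star)$ is entirely contained either in the index range $[1,i]$ (so it is a window of $(G,\lambda)$) or in the index range $[j,\ell]$ (so it is a window of $(G',\lambda')$); here the hypothesis $j+\Delta-1 \le i \le \ell$ is exactly what guarantees this dichotomy. Write $\chi = (\phi_1,\ldots,\phi_i,\psi_{i+1},\ldots,\psi_\ell)$. The agreement hypothesis $\phi(v,i^\star)=\psi(v,i^\star)$ for all $v\in V$ and all $j\le i^\star\le i$ ensures that $\chi$ restricted to the time slots $[1,i]$ is exactly $\phi$, and $\chi$ restricted to the time slots $[j,\ell]$ is exactly $\psi$ (on the overlap $[j,i]$ the two agree, so there is no ambiguity); in particular $\chi$ is a well-defined temporal coloring of $(G^\star,\lambda^\star)$, and it uses only colors appearing in $\phi$ or in $\psi$, so it is a $k$-coloring.

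First I would prove the forward direction. Assume $\phi$ and $\psi$ are proper. Fix a $\Delta$-window $W_t=[t,t+\Delta-1]$ of $(G^\star,\lambda^\star)$, so $1\le t\le \ell-\Delta+1$, and fix an edge $e\in E^\star[W_t]$. If $t+\Delta-1\le i$, then $W_t$ is a $\Delta$-window of $(G,\lambda)$ and $E^\star[W_t]=E[W_t]$ (the snapshots in this range coincide), so since $\phi$ is proper, $e$ is temporally properly colored by $\phi$ at some time slot in $W_t$; as $\chi$ agrees with $\phi$ on $[1,i]\supseteq W_t$, $e$ is temporally properly colored by $\chi$ in $W_t$. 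Otherwise $t+\Delta-1\ge i+1$, hence $t\ge i-\Delta+2 \ge j+1 > j$ (using $i\ge j+\Delta-1$), so $W_t\subseteq[j+1,\ell]\subseteq[j,\ell]$, i.e. $W_t$ is a $\Delta$-window of $(G',\lambda')$; since $\psi$ is proper and $\chi$ agrees with $\psi$ on $[j,\ell]\supseteq W_t$, again $e$ is temporally properly colored by $\chi$ in $W_t$. So $\chi$ is proper.

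For the converse, assume $\chi$ is a proper $\Delta$-window temporal $k$-coloring of $(G^\star,\lambda^\star)$; I want to conclude the same for $\phi$ and for $\psi$. Every $\Delta$-window $W_t$ of $(G,\lambda)$ (i.e. $1\le t\le i-\Delta+1$) is also a $\Delta$-window of $(G^\star,\lambda^\star)$ lying in $[1,i]$, on which $\chi$ equals $\phi$ and $E[W_t]=E^\star[W_t]$; since $\chi$ properly colors each edge of $E^\star[W_t]$ somewhere in $W_t$, so does $\phi$, hence $\phi$ is proper. Symmetrically, every $\Delta$-window $W_t$ of $(G',\lambda')$ has $j\le t$ and $t+\Delta-1\le\ell$, so it is a $\Delta$-window of $(G^\star,\lambda^\star)$ lying in $[j,\ell]$, on which $\chi$ equals $\psi$; the same argument shows $\psi$ is proper. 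Finally, $\phi$ and $\psi$ are restrictions of the $k$-coloring $\chi$, so each uses at most $k$ colors.

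The only mild subtlety, and the one step I would state carefully, is the arithmetic showing that a window not contained in $[1,i]$ must be contained in $[j+1,\ell]$: from $t+\Delta-1\ge i+1$ we get $t\ge i-\Delta+2$, and combining with $i\ge j+\Delta-1$ gives $t\ge j+1$; together with $t+\Delta-1\le\ell$ this places $W_t$ inside $[j,\ell]$. Everything else is bookkeeping about restrictions of colorings agreeing on overlaps, so I do not anticipate a genuine obstacle — the content is essentially that the two index intervals $[1,i]$ and $[j,\ell]$ overlap in a block of at least $\Delta$ consecutive slots, which is wide enough to contain no ``split'' window.
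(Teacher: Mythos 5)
Your proof is correct and follows essentially the same route as the paper: the whole content is that, since the two index ranges overlap in at least $\Delta$ consecutive slots ($i \ge j+\Delta-1$), every $\Delta$-window of the concatenated instance lies entirely within $[1,i]$ or within $[j,\ell]$, so it is fully colored by $\phi$ or by $\psi$. You merely spell out the arithmetic the paper labels ``easy to check'' and phrase the main direction directly rather than by contradiction.
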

\begin{proof}
The first direction is obvious.
Assume for contradiction that $(\phi_1, \phi_2, \ldots \phi_i, \psi_{i+1} \ldots\psi_\ell)$ is \emph{not} a \propDeltaTempKColoring\ for $(G_1, G_2, \ldots, G_\ell)$. Then there is a $\Delta$-window $W$ and an edge $e\in E$ that appears at least once in $W$ but is never properly colored in $W$. However, it is easy to check that $W$ is completely contained in $[i]$ or $[j,\ell]$ and hence $\phi$ or $\psi$ color the whole $\Delta$-window $W$. Since by assumption, both $\phi$ and $\psi$ are \propDeltaTempKColoring s, there being an edge that exists in $W$ and is not properly colored is a contradicion.
\end{proof}

Now we are ready to describe an exponential time algorithm for \SWTempColoring. The main idea is to enumerate all partial \propDeltaTempColoring s for time windows of size $2\Delta$ and then check whether we can combine them to a \propDeltaTempColoring\ for the whole temporal graph using \cref{lem:overlap}.

\begin{theorem}\label{thm:singleexp}
\SWTempColoring\ can be solved in $O(k^{4\Delta\cdot n}\cdot T)$ time.
\end{theorem}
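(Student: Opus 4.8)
The plan is to turn the problem into a reachability question in an acyclic ``configuration graph'' and to use \Cref{lem:overlap} as the gluing tool. Assume first that $T\ge 2\Delta$ (the case $T<2\Delta$ is trivial: there are at most $2\Delta n$ vertex appearances, so one can simply enumerate all $k^{Tn}\le k^{2\Delta n}$ temporal colorings and test each one). Cover the time line by the length-$2\Delta$ \emph{blocks} $W'_q=[q,q+2\Delta-1]$ for $q=1,\dots,T-2\Delta+1$. Two features of this family are what we need: (i) every $\Delta$-window $W_t$ of $(G,\lambda)$ is contained in some block $W'_q$ (take $q=\min\{t,\,T-2\Delta+1\}$), so it suffices to control colorings block by block; and (ii) two consecutive blocks $W'_q$ and $W'_{q+1}$ share the $2\Delta-1\ge\Delta$ slots $[q+1,q+2\Delta-1]$, which is exactly the amount of overlap the hypothesis $j+\Delta-1\le i\le\ell$ of \Cref{lem:overlap} asks for.

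Now build the configuration graph $H$. For every block $W'_q$, enumerate all $k^{2\Delta n}$ functions $\phi\colon V\times W'_q\to[k]$ and keep those that are \propDeltaTempKColoring s of $(G,\lambda)|_{W'_q}$; testing one such $\phi$ takes time polynomial in $n,m,\Delta$ since there are at most $\Delta+1$ windows inside a block and for each we only inspect the (at most $m$) edges appearing in it. These surviving colorings form the vertex set of ``layer'' $L_q$ of $H$. Put an arc from $\phi\in L_q$ to $\psi\in L_{q+1}$ precisely when $\phi$ and $\psi$ agree on all appearances $V\times(W'_q\cap W'_{q+1})$. I claim that $(G,\lambda)$ admits a \propDeltaTempKColoring\ if and only if $H$ has a directed path through one vertex of each layer $L_1,L_2,\dots,L_{T-2\Delta+1}$; this can be decided by a single sweep over the layers (keep the set of layer-$q$ vertices reachable from $L_1$ and propagate).

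For correctness, the forward direction is immediate: restrict a global solution $\phi$ to each block; by property (i) each restriction is a \propDeltaTempKColoring\ of the corresponding block, and consecutive restrictions trivially agree on overlaps, so they trace out a path in $H$. For the backward direction, take a path $\phi^{(1)},\phi^{(2)},\dots$ and stitch it into a single coloring $\Phi$ by letting $\Phi$ on slot $s$ be $\phi^{(q)}$ restricted to $s$ for the \emph{smallest} $q$ with $s\in W'_q$; arc-compatibility makes this unambiguous because any two blocks containing $s$ agree there (by transitivity along consecutive overlaps). Then process the blocks left to right: writing $\Phi^{(q)}$ for the stitch of $\phi^{(1)},\dots,\phi^{(q)}$, one shows by induction, invoking \Cref{lem:overlap} once per block boundary with first instance $\Phi^{(q)}$ and second instance $\phi^{(q+1)}$ (their overlap has $2\Delta-1\ge\Delta$ slots, and on those slots $\Phi^{(q)}$ equals $\phi^{(q)}$, which agrees with $\phi^{(q+1)}$ by the arc), that $\Phi^{(q)}$ is a \propDeltaTempKColoring\ of $(G,\lambda)|_{[1,\,q+2\Delta-1]}$. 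For $q=T-2\Delta+1$ this yields that $\Phi$ is a \propDeltaTempKColoring\ of all of $(G,\lambda)$, and it uses at most $k$ colors since every $\phi^{(q)}$ does.

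Finally, the running time. There are $O(T)$ blocks; per block we generate $k^{2\Delta n}$ candidate colorings and poly-time check each. Each of the $O(T)$ block boundaries contributes at most $|L_q|\cdot|L_{q+1}|\le k^{2\Delta n}\cdot k^{2\Delta n}=k^{4\Delta n}$ candidate arcs, each tested in time polynomial in $\Delta n$; the final reachability sweep is linear in the size of $H$. The $k^{4\Delta n}$ term therefore dominates, giving total time $O(k^{4\Delta n}\cdot T)$ once the polynomial bookkeeping overhead is absorbed. The step I expect to need the most care is exactly this bookkeeping around the block decomposition: choosing the blocks so that properties (i) and (ii) hold simultaneously (in particular handling the ragged end of the time line, where padding with trivial snapshots is \emph{not} harmless because of edge effects of the sliding window), and pinning down the inductive application of \Cref{lem:overlap} so that the hypotheses about the overlap and about the colorings agreeing on it are met at every boundary. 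The enumeration and the reachability computation themselves are routine.
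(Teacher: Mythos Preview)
Your proposal is correct and follows essentially the same approach as the paper: enumerate all \propDeltaTempKColoring s of length-$2\Delta$ blocks, build a layered DAG whose arcs record agreement on overlaps, and decide by a single reachability sweep, with \Cref{lem:overlap} doing the gluing. The only noteworthy difference is the block granularity: the paper takes the $O(T/\Delta)$ blocks $[i\Delta+1,(i+2)\Delta]$ (overlap exactly $\Delta$), whereas you take all $O(T)$ sliding $2\Delta$-windows (overlap $2\Delta-1$); both choices meet the hypothesis of \Cref{lem:overlap} and both stay within the $O(k^{4\Delta n}\cdot T)$ bound, and your version has the small advantage of handling the end of the time line directly rather than assuming $\Delta\mid T$.
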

\begin{proof}
For the sake of simplicity, we assume that $T$ is divisible by $\Delta$. The general case can be proven alike.
We give the following algorithm for the problem:
\begin{compactenum}
  \item For $2\Delta$-windows $W_i = [i\Delta+1, (i+2)\Delta]$ for $i\in\{0,1,\ldots,T/\Delta-2\}$, enumerate all partial \propDeltaTempColoring s $\phi_{W_i}$, where each trivial snapshot is colored in some fixed but arbitrary way\footnote{This is an important trick that allows us to use this algorithm for the FPT result in \cref{thm:fpt-size-n}.}. 
  \item Create a \emph{directed acyclic graph} (DAG) with $\phi_{W_i}$ as vertices and connect $\phi_{W_i}$ and $\phi_{W_{i+1}}$ with a directed arc if the two proper $\Delta$-temporal colorings agree on the overlapping part.
  \item Create a source vertex $s$ and connect it to all $\phi_{W_1}$ with a directed arc and we create a sink vertex $t$ and add a directed arc from all $\phi_{W_{T/\Delta-2}}$ to it.
  \item If there is a path from $s$ to $t$, answer YES, otherwise NO.
\end{compactenum}
The running time is dominated by checking whether~$s$ and~$t$ are connected in the last step of the algorithm. This can be done e.g.\ by a BFS on the constructed DAG. The DAG has at most $k^{2\Delta\cdot n}\cdot T$ vertices and at most $k^{4\Delta \cdot n} \cdot T$ edges.
\end{proof}

\subsection{Fixed-Parameter Tractability.} Next, we show how to extend the algorithm presented in \cref{thm:singleexp} to achieve linear time fixed-parameter tractability with respect to the number $n$ of vertices. The main idea is to reduce the number of non-trivial snapshots in each $\Delta$-window.

\begin{theorem}\label{thm:fpt-size-n}
\SWTempColoring\ can be solved in $O(T)$ time if $n$ is a constant.
\end{theorem}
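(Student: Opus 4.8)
The plan is to build on the algorithm of \cref{thm:singleexp} and observe that the only thing preventing it from running in time $O(T)$ when $n$ is constant is the factor $k^{4\Delta n}$, which depends on $\Delta$ and $\Delta$ can be as large as $T$. So the key idea is to preprocess the input temporal graph to replace it with an equivalent instance in which every $\Delta$-window contains only a bounded number of \emph{non-trivial} snapshots — bounded by a function of $n$ alone — after which we can afford to ``compress'' each maximal run of trivial snapshots. Concretely, within any $\Delta$-window $W_t$, the edges that need to be properly colored are exactly those in $E[W_t]$, and once an edge has been properly colored at some time step inside the window, further snapshots in that window are not needed for it. Since the underlying graph has at most $\binom{n}{2}$ edges, intuitively at most $O(n^2)$ ``useful'' snapshots per window suffice. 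The plan is to make this precise via a reduction rule: if a $\Delta$-window contains a long run of consecutive snapshots, we argue (similarly in spirit to \cref{th:TCpolyKernel}, using a matching/covering argument on which edge gets properly colored in which snapshot) that we may delete some of them, or replace a long run of trivial snapshots by a short ``gadget'' run together with a suitably decreased $\Delta$, obtaining an equivalent instance.

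First I would formalize the reduction: repeatedly, as long as some $\Delta$-window $W_t$ contains more than $f(n)$ non-trivial snapshots for an appropriate $f(n) = O(n^2)$, identify within $W_t$ a non-trivial snapshot $G_i$ all of whose active edges $E_i$ are ``redundant'' in the sense that each $e \in E_i$ is active at some other time step within \emph{every} $\Delta$-window that contains $i$; then delete the snapshot $G_i$ (shifting later snapshots down by one and decreasing $T$ by one). The combinatorial heart of the argument is to show that when there are more than $O(n^2)$ non-trivial snapshots in a window, such a redundant snapshot must exist — this follows by a counting argument: each non-trivial snapshot must contain an edge that is ``witnessed'' only there within some window, but there are only $\binom{n}{2}$ edges. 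After exhaustively applying this rule, every $\Delta$-window contains at most $O(n^2)$ non-trivial snapshots. Next I would handle the long runs of trivial snapshots that may remain: a maximal run of $\ell$ consecutive trivial snapshots that is longer than $\Delta$ can be safely shortened; and more importantly, since trivial snapshots impose no constraints, the behavior of any proper sliding $\Delta$-window coloring across such a run is unconstrained there, so we may ``collapse'' the temporal structure. The cleanest way to exploit both facts together is: after the reductions, break the timeline into maximal blocks separated by long trivial runs; within each block the effective number of snapshots is $O(n^2)$, apply the algorithm of \cref{thm:singleexp} with the \emph{reduced} window parameter (the footnote in \cref{thm:singleexp}, which fixes trivial-snapshot colorings arbitrarily, is exactly what makes the blocks independent via \cref{lem:overlap}), and stitch the block solutions together.

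Once the instance has each $\Delta$-window containing $O(n^2)$ non-trivial snapshots, the algorithm of \cref{thm:singleexp} — or rather a variant of it that enumerates partial colorings only of the non-trivial snapshots in each $2\Delta$-window — runs in time $O(k^{g(n)} \cdot T)$ for some function $g$ depending only on $n$ (and $k$, but $k \le n$ without loss of generality since $k \ge n$ trivially colors everything). With $n$ constant this is $O(T)$. I would also need to check that the preprocessing itself runs in time $O(T)$ (or at least $h(n) \cdot T$): each application of the reduction rule can be localized and, with careful bookkeeping of which edges are witnessed in which windows, the whole reduction can be carried out in a single sweep over the $T$ time steps.

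The main obstacle I anticipate is proving that a redundant snapshot is guaranteed to exist whenever a window has too many non-trivial snapshots, \emph{and} doing so in a way that is robust under the iterative deletions (so that the reduction terminates and the final instance is genuinely equivalent in both directions). The subtlety is that an edge $e$ active in snapshot $G_i$ might be the unique witness for $e$ in window $W_t$ but \emph{not} in an overlapping window $W_{t'}$; ensuring that deleting $G_i$ preserves feasibility for \emph{all} windows simultaneously — not just $W_t$ — is where the argument needs care, and is analogous to (but more delicate than) the alternating-path/matching argument used in the proof of \cref{th:TCpolyKernel}. A clean formulation that sidesteps some of this is to argue directly that if a window contains more than $\binom{n}{2}+\Delta'$ snapshots for the right quantity, then \emph{some} maximal trivial run inside it is too long and can be trimmed, or some non-trivial snapshot's edge set is entirely covered elsewhere — reducing everything to a bound of the form ``useful length of any window $\le$ poly$(n)$'', after which the \cref{thm:singleexp} machinery applies verbatim with $\Delta$ replaced by this polynomial bound.
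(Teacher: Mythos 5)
Your high-level plan coincides with the paper's: preprocess so that every $\Delta$-window contains only $g(n)$ non-trivial snapshots, then run the algorithm of \cref{thm:singleexp}, whose footnote (fixing the coloring of trivial snapshots arbitrarily) is indeed the device that makes the enumeration depend only on $n$. However, the combinatorial core of your preprocessing --- the existence of a ``redundant'' snapshot whenever a window has more than $O(n^2)$ non-trivial ones, and the safety of removing it --- has a genuine gap that you flag but do not close. Your counting argument (``each non-trivial snapshot must contain an edge witnessed only there within some window, but there are only $\binom{n}{2}$ edges'') does not bound the number of essential snapshots per window by $O(n^2)$: each time slot $i$ lies in up to $\Delta$ windows, so the relevant witnessing objects are (edge, window) pairs, not edges. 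More importantly, even for a snapshot $G_i$ all of whose active edges reappear in every window containing $i$, replacing $G_i$ by a trivial snapshot is not obviously equivalence-preserving: a proper coloring of the original instance may properly color some $e\in E_i$ only at time $i$ within some window, and re-routing $e$ to one of its other appearances may clash with what those time slots must do for other edges. This is exactly the matching-style difficulty you name as the ``main obstacle,'' and neither your deletion rule nor your fallback resolves it. The paper avoids it with a coarser pigeonhole: there are at most $2^{\binom{n}{2}}\le 2^{n^2}$ distinct snapshots, so if $\Delta > 2\cdot 2^{n^2}\cdot n^2$ some snapshot type occurs more than $2n^2$ times in the window as an \emph{identical copy}; a ``middle'' copy (with $n^2$ identical copies on each side) can be made trivial because every window containing it retains at least $n^2$ identical copies --- enough to dedicate one whole time slot to each of the at most $\binom{n}{2}$ edges of that snapshot, which sidesteps the re-routing conflict entirely. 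The resulting bound per window is $2\cdot 2^{n^2}\cdot n^2$ (exponential in $n^2$, not $O(n^2)$), which is still a function of $n$ alone and suffices.

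A secondary issue: where you propose to delete snapshots and shift the timeline (decreasing $T$), you change which time slots each $\Delta$-window contains, so windows of the reduced instance no longer correspond to windows of the original. The paper replaces copies by trivial snapshots \emph{in place} precisely to keep the window structure intact (compare \cref{obs:Delta-monotonicity-coloring}, which must adjust $\Delta$ when the timeline is stretched). Your block decomposition via \cref{lem:overlap} is fine in spirit but becomes unnecessary once the in-place replacement is used.
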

\begin{proof}
We present a preprocessing step to reduce the number of non-trivial snapshots in any $\Delta$-window and then use the algorithm of \cref{thm:singleexp} to solve the problem.

The reduction rule is based on the observation that if some snapshot appears at least $n^2$ times in a $\Delta$-window, then the edges of this snapshot can be properly colored with 2 colors within the $\Delta$-window. In other words, all but $n^2$ copies of the snapshot in the $\Delta$-window are redundant for optimal coloring and each of them could be replaced by the trivial snapshot. When implementing this idea one should take care to guarantee that replacing a snapshot by the trivial one does not reduce the number of copies of the snapshot in other $\Delta$-windows which contain at most $n^2$ copies of the snapshot.

Formally the reduction rule is as follows. Since the number of different snapshots is at most $2^{{n \choose 2}} \leq 2^{n^2}$,
by the pigeonhole principle if $\Delta > 2 \cdot 2^{n^2} \cdot n^2$, then in every $\Delta$-window there exists a snapshot that appears more than $2n^2$ times in that $\Delta$-window. For every such a snapshot that contains at least one edge, we replace by the trivial snapshot one of its "middle" copies, that is, one that has at least $n^2$ copies appearing earlier and $n^2$ copies that appear later in the $\Delta$-window. This reduction rule guarantees that every $\Delta$-window that contains the modified snapshot also contains at least $n^2$ copies of the original snapshot appearing either earlier or later in the $\Delta$-window.

The reduction rule can be applied exhaustively by
linearly sweeping over all $\Delta$-windows once in the following way. For each different graph (snapshot) 
we store a list of occurrences and update these lists every time
we move the $\Delta$-window by one. Having these lists, it is straightforward to count the occurrences and replace the middle ones by trivial snapshots. When we move the $\Delta$-window, we just have to update two lists: the one of the graph that enters the $\Delta$-window and the one of the graph that leaves. This requires a lookup table of size
$2^{\binom{n}{2}}\le 2^{n^2}$ but takes only linear time in~$T$.
Note that after this procedure, every $\Delta$-window contains at most $2\cdot 2^{n^2}\cdot n^2$ non-trivial snapshots.

Now we apply the algorithm of \cref{thm:singleexp}. 
Note that after the reduction step the number of \emph{non-trivial} snapshots in every $\Delta$-window depends only on $n$. Furthermore, since we can assume that $k \leq n$, the number of colorings that are enumerated in Step 1 of the algorithm in \cref{thm:singleexp} is bounded by a function of $n$. This completes the proof.
\end{proof}

The FPT result of \Cref{thm:fpt-size-n} is complemented by the following theorem, in which we 
exclude the possibility of a polynomial-sized kernel for \SWTempColoring\ 
with respect to the number $n$ of vertices. 
This comes in contrast to the existence of a polynomial-sized kernel for \TempColoring\ with respect to $n$ 
(cf.~\Cref{th:TCpolyKernel}).

\begin{proposition}\label{thm:nopk}
\SWTempColoring\ does not admit a polynomial-sized kernel with respect to the number $n$ of vertices for all $\Delta\ge 2$ and $k\ge 2$ unless NP $\subseteq$ coNP/poly.
\end{proposition}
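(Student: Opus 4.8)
The standard route to ruling out polynomial kernels is a \emph{cross-composition} (or an OR-composition), so the plan is to design a cross-composition from an NP-hard problem into \SWTempColoring\ parameterized by $n$. I would compose many instances of \THREEFOURSAT\ (the same source problem already used in \cref{thm:hardness}) into a single instance of \SWTempColoring\ whose number of vertices is bounded polynomially in the size of the \emph{largest} input instance plus $\log t$, where $t$ is the number of composed instances. Concretely, I would first apply a polynomial-time preprocessing so that all $t$ instances share the same number of variables $N$ and clauses $M$ (padding with trivially satisfiable clauses if necessary), and then use the reduction of \cref{thm:hardness} on each instance. The key observation is that the reduction of \cref{thm:hardness} always produces a temporal graph on $O(N+M)$ vertices with lifetime $T=3$; since the underlying vertex set does not depend on which formula we encode, we can \emph{reuse the same vertices} across all $t$ encodings and place them consecutively in time, separated by blocks of trivial snapshots whose length is chosen (using \Cref{obs:Delta-monotonicity-coloring}) so that no $\Delta$-window straddles two consecutive encodings. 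Then the composed temporal graph has a proper sliding $\Delta$-window temporal $2$-coloring if and only if \emph{every} one of the $t$ formulas is satisfiable — which is exactly an AND-composition (``unbounded conjunction''); alternatively one arranges the gadgets so that reachability in the DAG of \cref{thm:singleexp} picks out a single satisfiable instance, giving an OR-composition. Either way the vertex count stays $O(N+M)=\mathrm{poly}(\max_i |x_i|)$, independent of $t$, which is precisely the composition condition for the parameter $n$.

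The main steps, in order, are: (i) recall the cross-composition framework and the fact that existence of a polynomial kernel for an NP-hard parameterized problem that admits a cross-composition from an NP-hard language implies $\mathrm{NP}\subseteq\mathrm{coNP/poly}$~\cite{CyganFKLMPPS15}; (ii) fix the source problem as \THREEFOURSAT\ and a suitable polynomial equivalence relation so that all malformed inputs and all inputs not sharing the parameters $(N,M)$ fall in trivial classes; (iii) pad each instance so they share $(N,M)$ and apply the \cref{thm:hardness} reduction, noting the output has a \emph{fixed} vertex set $V$ of size $O(N+M)$ and constant lifetime; (iv) concatenate the $t$ temporal graphs along the time axis with separating blocks of trivial snapshots, using \Cref{obs:Delta-monotonicity-coloring} to absorb the blocks so that the reduction to $\Delta=2$ is preserved and no $\Delta$-window overlaps two blocks; (v) verify the ``if and only if'' between a proper coloring of the composed instance and satisfiability of the desired Boolean combination of the $t$ formulas; (vi) observe $|V|=n=\mathrm{poly}(\max_i|x_i|+\log t)$ and conclude.

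The part that needs genuine care is step (iv)–(v): ensuring the gadgets for different formulas do not interact. Since the reduction of \cref{thm:hardness} builds, in each of its three snapshots, edges only among the gadget vertices of \emph{one} formula, placing blocks of $\Delta-1$ (or more) trivial snapshots between consecutive three-snapshot encodings guarantees that every $\Delta$-window lies within a single encoding (for $\Delta=2$ a single trivial snapshot suffices; for larger $\Delta$ one enlarges the block and the trivial snapshots are vacuously satisfied in their windows). Then the coloring of the composed temporal graph decomposes, via \Cref{lem:overlap} applied blockwise, into independent colorings of the individual encodings, so feasibility is equivalent to simultaneous satisfiability of all $t$ formulas — an AND-composition, which is sufficient for the lower bound. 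I would therefore present this as an AND-cross-composition; the only subtlety to double-check is the padding in step (iii), i.e.\ that adding trivially satisfiable clauses and dummy variables to a \THREEFOURSAT\ instance (to normalise $N$ and $M$) keeps it in \THREEFOURSAT\ and preserves satisfiability, which is routine.
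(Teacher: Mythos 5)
Your overall framework matches the paper's: an AND-cross-composition from \THREEFOURSAT\ with the equivalence relation ``same number of variables and clauses,'' reusing a fixed vertex set and concatenating the \cref{thm:hardness} encodings along the time axis. The gap is in your step (iv), the gluing. You propose to separate consecutive three-snapshot encodings by blocks of trivial snapshots so that ``every $\Delta$-window lies within a single encoding.'' That is not what happens: with $\Delta=2$, the window starting at the last snapshot $G_3^{(i)}$ of encoding $i$ consists of $G_3^{(i)}$ and a trivial snapshot, and by \cref{def:sliding-Delta-temp-coloring} every edge of $E[W]$ must then be temporally properly colored at a slot where it is \emph{active} --- which, in that window, can only be the slot of $G_3^{(i)}$ itself. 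The same happens symmetrically for the window ending at $G_1^{(i+1)}$. So inserting trivial separators silently adds the requirement that \emph{all} edges of the first and last snapshot of each encoding be properly colored simultaneously within that single snapshot. The \cref{thm:hardness} gadgets contain triangles (the clause-gadget cores) present in every snapshot, which can never be fully properly colored in one snapshot with $k=2$ colors; hence your composed instance is a NO-instance regardless of the input formulas, and the AND-equivalence fails. Lengthening the trivial blocks does not help, and \cref{obs:Delta-monotonicity-coloring} does not rescue this either: that lemma inserts a trivial snapshot only together with incrementing $\Delta$ to $\Delta+1$, precisely so that no window ever isolates a single non-trivial snapshot.

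The paper avoids this by using \emph{non-trivial} buffers: each encoding is extended by a fourth and fifth snapshot that are copies of its first snapshot, and the five-snapshot blocks are concatenated with no separators. Then every $2$-window contains two non-trivial snapshots, the recurring triangles can always rotate their monochromatic edge between consecutive snapshots, and the two extra copies give enough slack to pass from an arbitrary proper coloring at the end of block $i$ to the coloring required at the start of block $i+1$. If you replace your trivial separators by such duplicated-snapshot buffers (and then verify the boundary windows explicitly rather than appealing to \cref{lem:overlap}, whose hypotheses your decomposition would not satisfy at the seams), your argument goes through; the remaining steps (equivalence relation, per-instance reduction, $n=O(N+M)$ independent of the number $\ell$ of composed instances, and the AND-cross-composition framework) are as in the paper. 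The padding of step (iii) is unnecessary once the equivalence relation places instances with different $(N,M)$ in different classes.
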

 We need the following notation for the proof.
An equivalence
relation~$R$ on the instances of some problem~$L$ is a
\emph{polynomial equivalence relation} if
\begin{compactenum}[(i)]
 \item one can decide for each two instances in time polynomial in their sizes whether they belong to the same equivalence class, and
 \item for each finite set~$S$ of instances, $R$ partitions the set into at most~$(\max_{x \in S} |x|)^{O(1)}$ equivalence classes.  
\end{compactenum}

An \emph{AND-cross-composition} of a problem~$L\subseteq \Sigma^*$ into a
parameterized problem~$P$ (with respect to a polynomial equivalence
relation~$R$ on the instances of~\(L\)) is an algorithm that takes
$\ell$ $R$-equivalent instances~$x_1,\ldots,x_\ell$ of~$L$ and
constructs in time polynomial in $\sum_{i=1}^\ell |x_i|$ an instance
$(x,k)$ of~\(P\) such that
\begin{compactenum}[(i)]
\item $k$ is polynomially upper-bounded in $\max_{1\leq i\leq \ell}|x_i|+\log(\ell)$ and 
\item $(x,k)$ is a yes-instance of $P$ if and only if $x_{\ell'}$ is a yes-instance of $L$ for every $\ell'\in \{1,\ldots,\ell\}$. 
\end{compactenum}

If an NP-hard problem~\(L\) AND-cross-composes into a parameterized
problem~$P$, then~$P$ does not admit a polynomial-size kernel, unless NP $\subseteq$ coNP/poly~\cite{bodlaender2014kernelization, CyganFKLMPPS15},
which would cause a collapse of the polynomial-time hierarchy to the third
level.

\begin{proof}[Proof of \Cref{thm:nopk}]
We provide an AND-cross-composition from \THREEFOURSAT~\cite{tovey1984simplified}. Recall that in \THREEFOURSAT\ we are asked to decide whether a given Boolean formula~$\phi$ is satisfiable and $\phi$ is in conjunctive normal form where every clause has exactly three distinct literals and every variable appears in exactly four clauses.
We define relation~$R$: Two instances $\phi$ and $\psi$ are equivalent under~$R$ if and only if the number of variables and the number of clauses is the same in both formulas. Clearly, $R$ is a polynomial equivalence relation. 

Now let $\phi_1,\ldots,\phi_\ell$ be $R$-equivalent instances of \THREEFOURSAT. Arbitrarily number all variables and clauses of all formulas. For each $\phi_i$ with $i\in[\ell]$ we construct an instance of \SWTempColoring\ as defined in the proof of \cref{thm:hardness} (for a illustration see \cref{fig:reduction}) with the only difference that we add a fourth and fifth snapshot both of which are copies of the first snapshot (\cref{fig:reduction:sub1}). Now we put all constructed temporal graphs next to each other in temporal order, that is, if $(G_1^{(i)},G_2^{(i)}, \ldots, G_5^{(i)})$ is the graph constructed for $\phi_i$, then the overall temporal graph is $(G_1^{(1)},G_2^{(1)}, \ldots, G_5^{(1)},G_1^{(2)},G_2^{(2)}, \ldots, G_5^{(2)},\ldots,$ $G_1^{(\ell)},G_2^{(\ell)}, \ldots, G_5^{(\ell)})$. Here, the vertex set stays the same. We identify the vertices with their names according to the numbering of the variables and clauses of the formulas. Further, we set $\Delta=2$ and $k=2$.

This instance can be constructed in polynomial time and the number of vertices is linearly bounded in the size of the formulas, hence~$|V|$ is polynomially upper-bounded by a maximum size of an input instance. Furthermore, it is easy to check that the two extra copies of the first snapshot in the construction (\cref{fig:reduction:sub1}) allows to go from an arbitrary proper coloring of snapshot $G_4^{(i)}$ to $G_1^{(i+1)}$ for any $i\in[\ell-1]$. It follows from the proof of \cref{thm:hardness}, that the constructed \SWTempColoring\ instance is a 
yes-instance if and only if for every~$i\in [\ell]$ formula $\phi_i$ is satisfiable.

Since \THREEFOURSAT\ is NP-hard~\cite{tovey1984simplified} and we AND-cross-composed it into \SWTempColoring{} with $\Delta=2$ and $k=2$ parameterized by $n=|V|$, the result follows.
\end{proof}

\subsection{Structural Graph Parameters and Approximation.} Finally, we investigate the possibility to structurally improve the fixed-parameter tractability result by replacing the parameter $n$ with a smaller parameter. We answer this negatively by showing that $\Delta$-\SWTempKColoring\ remains NP-hard even if the underlying graph has a constant-size vertex cover, which is a fairly large structural parameter. 

\begin{theorem}
\label{thm:underlyingVChard}
Let $k\geq 2$. Then 2-\SWTempKColoring\ is NP-hard, even if the vertex cover number of the underlying graph is at most $2k+13$.
\end{theorem}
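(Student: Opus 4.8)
The plan is to reduce from \THREEFOURSAT\ to \textsc{2-SW Temp.\ 2-Coloring} (the extension to arbitrary $k\ge 2$ being routine, as in \cref{thm:hardness}), producing a temporal graph whose underlying graph has a vertex cover of constant size. The fundamental difficulty, compared with the reduction of \cref{thm:hardness}, is that there we spent $\Theta(n+m)$ vertices on pairwise-disjoint variable and clause gadgets, so the underlying graph had an unbounded vertex cover. To fix this I would trade space for time: keep a constant-size \emph{core} $C$ of vertices --- which will be exactly the vertex cover, with $|C|\le 2k+13$ --- and represent each variable $x_i$ by a single \emph{peripheral} vertex $p_i$, where all peripheral vertices form an independent set in the underlying graph (each $p_i$ is adjacent only to core vertices, and only in a few carefully chosen snapshots). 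The lifetime $T$ is then polynomial in $n+m$ and is partitioned into consecutive blocks: one \emph{variable block} per variable and a constant number of \emph{clause blocks} per clause, with the blocks separated by stretches of trivial snapshots. Since $\Delta=2$, every $2$-window either lies inside a single block or meets only trivial snapshots, so \cref{lem:overlap} (applied repeatedly) lets us analyze each block essentially in isolation and then glue the partial colorings together.

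In a variable block for $x_i$ I would use the core to force $p_i$ into one of exactly two admissible colour patterns, which we read as the truth value of $x_i$. Concretely, $2k$ of the core vertices serve as ``colour anchors'': for a reference colouring of these anchors (essentially forced, up to a global permutation of colours, by a small rigid sub-gadget inside $C$ of the triangle-plus-$C_4$ type used in \cref{thm:hardness}), attaching $p_i$ to a chosen subset of the anchors in a given snapshot restricts $p_i$'s colour there to a prescribed list, exactly as in list colouring; combining a few such snapshots pins $p_i$ down to the two true/false patterns and nothing else. This is where the $2k$ summand comes from; the additive constant $13$ is the size of a constant-size clause-checking sub-gadget inside $C$, a bounded analogue of the clause construction from the proof of \cref{thm:hardness}.

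For a clause block of $c_j=(\ell_{j,1}\vee\ell_{j,2}\vee\ell_{j,3})$ I would wire the three peripheral vertices corresponding to the variables of $c_j$ to the clause sub-gadget of $C$ so that the block admits a proper sliding $2$-window $2$-colouring if and only if at least one literal $\ell_{j,s}$ is satisfied by the patterns currently carried by those peripheral vertices; the correctness of this sub-gadget is a finite case check essentially identical to the one in \cref{thm:hardness}. The main obstacle --- and the part I expect to require the most care --- is \emph{transporting} each variable's truth value forward in time: $p_i$ has an appearance in every one of the $T$ time slots, so between its variable block and its (up to four) clause blocks its colour could drift. I would enforce consistency by a chain of ``equality gadgets'' along the relevant time interval, each built only from edges between $p_i$ and core vertices and relying on the fact that within a $2$-window an edge that is active in two consecutive slots must be properly coloured in one of them. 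Making these chains behave correctly at the endpoints of each edge's active interval --- so that they neither over-constrain nor under-constrain $p_i$, and interact cleanly with the anchor-based forcing in the variable block --- is the delicate bookkeeping that the proof has to get right.

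Finally I would assemble the reduction and verify the three claims: it is computable in polynomial time; the underlying graph's vertex cover is $C$, of size at most $2k+13$, since by construction no edge joins two peripheral vertices; and $(G,\lambda)$ admits a proper sliding $2$-window $2$-colouring if and only if $\phi$ is satisfiable, by combining the per-block correctness statements above with \cref{lem:overlap}. The generalization to any fixed $k\ge 2$ replaces the two colours used inside each gadget by $k$ colours, scaling the number of colour anchors to $2k$ while leaving the clause sub-gadget of constant size, which yields the stated bound $2k+13$.
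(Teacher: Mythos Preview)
Your high-level architecture --- a constant-size core $C$ that will be the vertex cover, one peripheral vertex per variable adjacent only to $C$, and the complexity pushed into the time axis --- is exactly what the paper does. The gap is in how you propose to make it work.

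You want to separate variable blocks and clause blocks by stretches of trivial snapshots so that, via \cref{lem:overlap}, each block can be analysed in isolation. But isolation is precisely what defeats transport: if the snapshots between $p_i$'s variable block and a later clause block are trivial, nothing constrains $p_i$'s colour across that stretch, and the truth value chosen in the variable block is lost before it is ever tested. Your proposed fix --- a chain of equality gadgets between $p_i$ and core vertices along the relevant interval --- contradicts the trivial-separator assumption: either those intermediate snapshots carry the gadget edges (and are not trivial, so the blocks are no longer isolated and the appeal to \cref{lem:overlap} collapses), or they do not (and the gadgets cannot exist there). You correctly flag transport as ``the part I expect to require the most care'', but as written the two halves of the plan are incompatible, and the proposal gives no concrete mechanism that survives the contradiction.

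The paper resolves this by abandoning block isolation entirely and building a global \emph{clock}. A four-vertex sub-gadget on core vertices $u_1,\ldots,u_4$ (this is where the $2k$ in the bound comes from for general $k$) forces $u_3$ and $u_4$ to receive different colours in \emph{every} snapshot, and both $u_3$ and $u_4$ are adjacent to every peripheral vertex $v_i$ in \emph{every} snapshot. At each time step exactly one of $\{u_3,v_i\},\{u_4,v_i\}$ is monochromatic; the $\Delta=2$ constraint then forces that edge to be properly coloured at the next step, so the roles swap. Hence the parity of ``which of the two edges is proper'' is fixed at time $1$ and propagated through the entire lifetime automatically --- one global bit per variable, with no per-block bookkeeping. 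Clause checking happens in designated snapshots (four per clause, so $T=4m$) where the $13$-vertex sub-gadget reads off the current parities of the three relevant $v_i$. One further difference: the paper reduces from \ONEINTHREESAT\ rather than \THREEFOURSAT, because the bounded-size clause gadget most naturally enforces that \emph{exactly one} of three peripheral vertices carries a given parity, which matches an exactly-one-in-three constraint directly.
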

\begin{proof}
We present a reduction from \ONEINTHREESAT~\cite{schaefer1978complexity} to
2-\SWTempKColoring. The reduction can be easily modified to a larger
number of colors, but we omit here the details. In \ONEINTHREESAT\ we are given a
collection of triples (clauses) of variables and the task is to determine whether there is an
assignment of truth values to variables such that each clause contains exactly
one variable that is set to true.
Given an instance $I$ of \ONEINTHREESAT\ with $n$ variables and $m$ clauses, we
construct a temporal graph $(G,\lambda)$ with~$T=4m$~snapshots in the following
way. The construction is visualized in \cref{fig:reductionunderlyingvc}.
\begin{figure}[t]
\begin{center}
\begin{subfigure}{0.24\textwidth}\centering
\scalebox{.8}{
    \begin{tikzpicture}[line width=1pt, scale=.45]
    \clip (-5,-9) rectangle (4, 5); 
    \node[vert, fill=ourblue] (A1) at (-1,2) {}; 
    \node[vert, fill=lipicsyellow] (A2) at (1,2) {};
    \node[vert, fill=lipicsyellow] (A3) at (-1,4) {};
    \node[vert, fill=ourblue] (A4) at (1,4) {};

    \node[vert2,fill=lipicsyellow] (U1) at (-3,0) {};
    \node[vert2,fill=ourblue] (U2) at (-2,0) {};
    \node (U22) at (-1,0) {$\ldots$};
    \node[vert2,fill=ourblue] (U3) at (0,0) {};
    \node[vert2,fill=ourblue] (U4) at (1,0) {};
    \node (U42) at (2,0) {$\ldots$};
    \node[vert2,fill=lipicsyellow] (U5) at (3,0) {};
    
    \node[vert] (B1) at (-2,-2) {}; 
    \node[vert] (B2) at (0,-2) {};
    \node[vert] (B3) at (2,-2) {};    
    \node[vert, fill=lipicsyellow] (B4) at (-2,-4) {}; 
    \node[vert, fill=lipicsyellow] (B5) at (0,-4) {};
    \node[vert, fill=lipicsyellow] (B6) at (2,-4) {};    
    \node[vert, fill=ourblue] (B7) at (-2,-6) {}; 
    \node[vert, fill=ourblue] (B8) at (0,-6) {};
    \node[vert, fill=ourblue] (B9) at (2,-6) {};    
    \node[vert] (B10) at (-2,-8) {}; 
    \node[vert] (B11) at (0,-8) {};
    \node[vert] (B12) at (2,-8) {};
    
    \node[vert, fill=lipicsyellow] (C) at (-3,-7) {};
    
    \path[line width=2pt]  (A1) edge (A3);
    \path[line width=2pt]  (A4) edge (A3);
    \path[line width=2pt]  (A4) edge (A2);
    \draw (A1) -- (U1);
    \draw (A1) -- (U2);
    \draw (A1) -- (U3);
    \draw (A1) -- (U4);
    \draw (A1) -- (U5);
    \draw (A2) -- (U1);
    \draw (A2) -- (U2);
    \draw (A2) -- (U3);
    \draw (A2) -- (U4);
    \draw (A2) -- (U5);
    
    \path  (B7) edge (B4);
    \path  (B8) edge (B5);
    \path  (B9) edge (B6);
     
     \path[line width=2pt, bend left=60,draw=white]  (C) edge (A1);  
    
    \path (B1) edge (B2);
    \path (B2) edge (B3);
    \path[bend left]  (B3) edge (B1);
    
    \path (B10) edge (B11);
    \path (B11) edge (B12);
    \path[bend left]  (B12) edge (B10);

    \end{tikzpicture}}
    \caption{Type 1 snapshot.}\label{fig:reductionvc:sub0}
    \end{subfigure}
    \begin{subfigure}{0.24\textwidth}\centering
\scalebox{.8}{
    \begin{tikzpicture}[line width=1pt, scale=.45]
    \clip (-5,-9) rectangle (4, 5); 
    \node[vert, fill=ourblue] (A1) at (-1,2) {}; 
    \node[vert, fill=lipicsyellow] (A2) at (1,2) {};
    \node[vert, fill=lipicsyellow] (A3) at (-1,4) {};
    \node[vert, fill=ourblue] (A4) at (1,4) {};

    \node[vert2,fill=ourblue] (U1) at (-3,0) {};
    \node[vert2,fill=lipicsyellow] (U2) at (-2,0) {};
    \node (U22) at (-1,0) {$\ldots$};
    \node[vert2,fill=lipicsyellow] (U3) at (0,0) {};
    \node[vert2,fill=lipicsyellow] (U4) at (1,0) {};
    \node (U42) at (2,0) {$\ldots$};
    \node[vert2,fill=ourblue] (U5) at (3,0) {};
    
    \node[vert, fill=ourblue] (B1) at (-2,-2) {}; 
    \node[vert, fill=ourblue] (B2) at (0,-2) {};
    \node[vert, fill=lipicsyellow] (B3) at (2,-2) {};    
    \node[vert, fill=lipicsyellow] (B4) at (-2,-4) {}; 
    \node[vert, fill=lipicsyellow] (B5) at (0,-4) {};
    \node[vert, fill=ourblue] (B6) at (2,-4) {};    
    \node[vert, fill=ourblue] (B7) at (-2,-6) {}; 
    \node[vert, fill=ourblue] (B8) at (0,-6) {};
    \node[vert, fill=ourblue] (B9) at (2,-6) {};    
    \node[vert] (B10) at (-2,-8) {}; 
    \node[vert] (B11) at (0,-8) {};
    \node[vert] (B12) at (2,-8) {};
    
    \node[vert, fill=lipicsyellow] (C) at (-3,-7) {};
    
    \path[line width=2pt]  (A1) edge (A2);
    \draw (A1) -- (U1);
    \draw (A1) -- (U2);
    \draw (A1) -- (U3);
    \draw (A1) -- (U4);
    \draw (A1) -- (U5);
    \draw (A2) -- (U1);
    \draw (A2) -- (U2);
    \draw (A2) -- (U3);
    \draw (A2) -- (U4);
    \draw (A2) -- (U5);
    
    \path[line width=2pt,ourred]  (B1) edge (U2);
    \path[line width=2pt,ourred]  (B2) edge (U3);
    \path[line width=2pt,ourred]  (B3) edge (U5);
    \path[line width=2pt]  (B1) edge (B4);
    \path[line width=2pt]  (B2) edge (B5);
    \path[line width=2pt]  (B3) edge (B6);
    
    \path  (B7) edge (B4);
    \path  (B8) edge (B5);
    \path  (B9) edge (B6);
    
    \path[line width=2pt]  (C) edge (B7);
    \path[line width=2pt]  (C) edge (B8);
    \path[line width=2pt]  (C) edge (B9);  
     
    \path[line width=2pt, bend left=60]  (C) edge (A1);  
    
    \path (B1) edge (B2);
    \path (B2) edge (B3);
    \path[bend left]  (B3) edge (B1);
    
    \path (B10) edge (B11);
    \path (B11) edge (B12);
    \path[bend left]  (B12) edge (B10);

    \end{tikzpicture}}
    \caption{Type 2 snapshot.}\label{fig:reductionvc:sub1}
    \end{subfigure}
    \begin{subfigure}{0.24\textwidth}\centering
\scalebox{.8}{
    \begin{tikzpicture}[line width=1pt, scale=.45]
    \clip (-5,-9) rectangle (4, 5); 
    \node[vert, fill=ourblue] (A1) at (-1,2) {}; 
    \node[vert, fill=lipicsyellow] (A2) at (1,2) {};
    \node[vert, fill=lipicsyellow] (A3) at (-1,4) {};
    \node[vert, fill=ourblue] (A4) at (1,4) {};

    \node[vert2,fill=lipicsyellow] (U1) at (-3,0) {};
    \node[vert2,fill=ourblue] (U2) at (-2,0) {};
    \node (U22) at (-1,0) {$\ldots$};
    \node[vert2,fill=ourblue] (U3) at (0,0) {};
    \node[vert2,fill=ourblue] (U4) at (1,0) {};
    \node (U42) at (2,0) {$\ldots$};
    \node[vert2,fill=lipicsyellow] (U5) at (3,0) {};
    
    \node[vert] (B1) at (-2,-2) {}; 
    \node[vert] (B2) at (0,-2) {};
    \node[vert] (B3) at (2,-2) {};    
    \node[vert, fill=lipicsyellow] (B4) at (-2,-4) {}; 
    \node[vert, fill=ourblue] (B5) at (0,-4) {};
    \node[vert, fill=lipicsyellow] (B6) at (2,-4) {};    
    \node[vert, fill=lipicsyellow] (B7) at (-2,-6) {}; 
    \node[vert, fill=ourblue] (B8) at (0,-6) {};
    \node[vert, fill=ourblue] (B9) at (2,-6) {};    
    \node[vert, fill=ourblue] (B10) at (-2,-8) {}; 
    \node[vert, fill=lipicsyellow] (B11) at (0,-8) {};
    \node[vert, fill=lipicsyellow] (B12) at (2,-8) {};
    
    \node[vert, fill=lipicsyellow] (C) at (-3,-7) {};
    
    \path[line width=2pt]  (A1) edge (A3);
    \path[line width=2pt]  (A4) edge (A3);
    \path[line width=2pt]  (A4) edge (A2);
    \draw (A1) -- (U1);
    \draw (A1) -- (U2);
    \draw (A1) -- (U3);
    \draw (A1) -- (U4);
    \draw (A1) -- (U5);
    \draw (A2) -- (U1);
    \draw (A2) -- (U2);
    \draw (A2) -- (U3);
    \draw (A2) -- (U4);
    \draw (A2) -- (U5);
    
    \path  (B7) edge (B4);
    \path  (B8) edge (B5);
    \path  (B9) edge (B6);
    
    \path[line width=2pt]  (B7) edge (B10);
    \path[line width=2pt]  (B8) edge (B11);
    \path[line width=2pt]  (B9) edge (B12);  
     
    \path[line width=2pt]  (B7) edge (B5);
    \path[line width=2pt]  (B8) edge (B6);
    \path[line width=2pt]  (B9) edge (B4); 
     
     \path[line width=2pt, bend left=60,draw=white]  (C) edge (A1);  
    
    \path (B1) edge (B2);
    \path (B2) edge (B3);
    \path[bend left]  (B3) edge (B1);
    
    \path (B10) edge (B11);
    \path (B11) edge (B12);
    \path[bend left]  (B12) edge (B10);

    \end{tikzpicture}}
    \caption{Type 3 snapshot.}\label{fig:reductionvc:sub2}
    \end{subfigure}
\begin{subfigure}{0.24\textwidth}\centering
\scalebox{.8}{
    \begin{tikzpicture}[line width=1pt, scale=.45]
    \clip (-5,-9) rectangle (4, 5); 
    \node[vert, fill=ourblue] (A1) at (-1,2) {}; 
    \node[vert, fill=lipicsyellow] (A2) at (1,2) {};
    \node[vert, fill=lipicsyellow] (A3) at (-1,4) {};
    \node[vert, fill=ourblue] (A4) at (1,4) {};

    \node[vert2,fill=ourblue] (U1) at (-3,0) {};
    \node[vert2,fill=lipicsyellow] (U2) at (-2,0) {};
    \node (U22) at (-1,0) {$\ldots$};
    \node[vert2,fill=lipicsyellow] (U3) at (0,0) {};
    \node[vert2,fill=lipicsyellow] (U4) at (1,0) {};
    \node (U42) at (2,0) {$\ldots$};
    \node[vert2,fill=ourblue] (U5) at (3,0) {};
    
    \node[vert] (B1) at (-2,-2) {}; 
    \node[vert] (B2) at (0,-2) {};
    \node[vert] (B3) at (2,-2) {};    
    \node[vert, fill=lipicsyellow] (B4) at (-2,-4) {}; 
    \node[vert, fill=lipicsyellow] (B5) at (0,-4) {};
    \node[vert, fill=lipicsyellow] (B6) at (2,-4) {};    
    \node[vert, fill=ourblue] (B7) at (-2,-6) {}; 
    \node[vert, fill=ourblue] (B8) at (0,-6) {};
    \node[vert, fill=ourblue] (B9) at (2,-6) {};    
    \node[vert] (B10) at (-2,-8) {}; 
    \node[vert] (B11) at (0,-8) {};
    \node[vert] (B12) at (2,-8) {};
    
    \node[vert, fill=lipicsyellow] (C) at (-3,-7) {};
    
    \path[line width=2pt]  (A1) edge (A2);
    \draw (A1) -- (U1);
    \draw (A1) -- (U2);
    \draw (A1) -- (U3);
    \draw (A1) -- (U4);
    \draw (A1) -- (U5);
    \draw (A2) -- (U1);
    \draw (A2) -- (U2);
    \draw (A2) -- (U3);
    \draw (A2) -- (U4);
    \draw (A2) -- (U5);
    
    \path  (B7) edge (B4);
    \path  (B8) edge (B5);
    \path  (B9) edge (B6);
     
     \path[line width=2pt, bend left=60,draw=white]  (C) edge (A1);  
    
    \path (B1) edge (B2);
    \path (B2) edge (B3);
    \path[bend left]  (B3) edge (B1);
    
    \path (B10) edge (B11);
    \path (B11) edge (B12);
    \path[bend left]  (B12) edge (B10);

    \end{tikzpicture}}
    \caption{Type 4 snapshot.}\label{fig:reductionvc:sub3}
    \end{subfigure}
\end{center}
\caption{Illustration of the reduction from \ONEINTHREESAT\ to \SWTempColoring\ of the proof of \cref{thm:underlyingVChard}. The vertex numbering in the description of the construction corresponds to a row-wise numbering from top-left to bottom-right. The first two rows correspond to vertices $u_1$ to $u_4$. The third row corresponds to vertices $v_1$ to $v_n$. The remaining rows correspond to vertices $w_1$ to $w_{13}$. Thin edges appear in all snapshots. Thick edges never appear consecutively and hence need to be colored properly. Red edges correspond clauses. The colors of the vertices correspond to the \propDeltaTempColoring\ constructed in the proof of \cref{thm:underlyingVChard}.}
\label{fig:reductionunderlyingvc}
\end{figure}

\emph{Construction:} In the construction, we classify the snapshots of the
constructed temporal graph by the remainders of their time slots when divided by
four. This gives us \emph{type~1}, \emph{type~2}, \emph{type~3}, and \emph{type~4} snapshots,
where type~4 snapshots are the ones with a time slot that is divisible by four
and the other type numbers correspond to the remainders of the time slots.
We start by adding four vertices $u_1$, $u_2$, $u_3$, and $u_4$ to $G$. In
snapshots of type~1 or type~3 we activate edges~$\{u_1,u_2\}$, $\{u_1,u_3\}$,
and $\{u_2,u_4\}$. In snapshots of type~2 or type~4 we activate
edge~$\{u_3,u_4\}$. For each variable $x_i$ we add a vertex~$v_i$. 
We connect each of $u_3$ and $u_4$ to all $v_i$ in all snapshots.
Next, we add 13 further vertices $w_1, w_2, \ldots, w_{13}$ to $G$.
In all snapshots we pairwise connect $w_1, w_2,$ and $w_3$; pairwise connect
$w_{11}, w_{12},$ and $w_{13}$; and activate edges $\{w_4,w_7\}$, $\{w_5,w_8\}$, and $\{w_6,w_9\}$.
In
snapshots of type~2 we activate edges $\{w_1,w_4\}$, $\{w_2,w_5\}$,
$\{w_3,w_6\}$, $\{u_3, w_{10}\}$, $\{w_7, w_{10}\}$, $\{w_8, w_{10}\}$, and
$\{w_9, w_{10}\}$ (see \cref{fig:reductionvc:sub1}). In snapshots of type~3 we
activate edges $\{w_4, w_9\}$, $\{w_5, w_7\}$, $\{w_6, w_8\}$, $\{w_7,
w_{11}\}$, $\{w_8, w_{12}\}$, and $\{w_9, w_{13}\}$ (see
\cref{fig:reductionvc:sub2}).
Lastly, let $x_{i_1}$, $x_{i_2}$, and $x_{i_3}$ be the three variables
contained in clause $c_j$. Then we activate edges $\{v_{i_1},w_1\}$,
$\{v_{i_2},w_2\}$, and $\{v_{i_3},w_3\}$ in snapshot $4j -2$ (see red
edges in \cref{fig:reductionvc:sub1}). Note that snapshot $4j -2$ has
type~2.

It is easy to check that this can be done in polynomial time. Note that vertices $u_1, u_2, u_3, u_4, w_1, w_2, \ldots, w_{13}$ form a vertex cover in $G$.
We are ready now to prove that the \ONEINTHREESAT\ instance $I$ is a
yes-instance if and only if $(G,\lambda)$ admits a proper 2-SW 2-temporal coloring.

\emph{($\Rightarrow$):} Assume we are given a yes-instance of \ONEINTHREESAT\
with a satisfying assignment. We show that the constructed instance of
\SWTempColoring\ is also a yes-instance by presenting a
\propDeltaTempColoring\ with two colors. Let
blue and yellow be the two colors we use. We always color $u_1$ and $u_4$
yellow and $u_2$ and $u_3$ blue.
If variable $x_i$ is set to true in the satisfying assignment, we color $v_i$
yellow in snapshots of type~1 and~3 and blue in snapshots of type~2 and~4. If
variable~$x_i$ is set to false in the satisfying assignment, we color~$v_i$
yellow in snapshots of type~2 and~4 and blue in snapshots of type~1 and~3.
Vertex~$w_{10}$ is always colored yellow.

Let clause $c_j$ be satisfied by its $s$-th variable (note that $1\le s\le 3$). We describe how to color vertices $w_1, \ldots, w_9$ in snapshot $4\cdot j -2$. Vertex~$w_s$ is colored yellow. Vertices in $\{w_1,w_2,w_3\}\setminus\{w_s\}$ are colored blue. Vertex~$w_{s+3}$ is colored blue. Vertices in $\{w_4,w_5,w_6\}\setminus\{w_{s+3}\}$ are colored yellow. Vertices $w_7$, $w_8$, and $w_9$ are colored blue. We further describe how to color vertices $w_4, \ldots, w_{13}$ in snapshot~$4\cdot j -1$. Note that $w_{10}$ is already colored yellow. We color vertex $w_{((s+1)\bmod 3)+4}$ blue and vertices in $\{w_4,w_5,w_6\}\setminus\{w_{((s+1)\bmod 3)+4}\}$ yellow. We color vertex $w_{(s\bmod 3)+7}$ yellow and vertices in $\{w_7,w_8,w_9\}\setminus\{w_{(s\bmod 3)+7}\}$ blue. We color vertex $w_{(s\bmod 3)+11}$ blue and vertices in $\{w_{11},w_{12},w_{13}\}\setminus\{w_{(s\bmod 3)+11}\}$ yellow. 

In all snapshots of type~1 and~4 we color vertices $w_4$, $w_5$, and $w_6$ yellow and vertices $w_7$, $w_8$, and $w_9$ blue. The coloring scheme so far is depicted in \cref{fig:reductionunderlyingvc}. Note that the colors of some vertices in some snapshots are not specified yet. These are the white vertices in \cref{fig:reductionunderlyingvc}. All these vertices belong to triangles, hence we can color them in a way that each triangle has one monochromatic edge. We choose as the edge that should remain monochromatic an edge that is properly colored in both adjacent snapshots. Such an edge always exists since all these triangles are also triangles in the adjacent snapshots and a triangle is never colored completely monochromatic.

\emph{($\Leftarrow$):} Assume the constructed instance of \SWTempColoring\ is a
yes-instance and we have a \propDeltaTempColoring\ with two colors. We show
that the given instance of \ONEINTHREESAT\ is also a yes-instance by
constructing a satisfying assignment. We claim that the following yields a satisfying assignment. For every variable~$x_i$, if edge~$\{u_3, v_i\}$ is colored properly in the first snapshot, we set~$x_i$ to true, otherwise we set~$x_i$ to false.

First we argue, that if an edge~$\{u_3, v_i\}$ is colored properly in the first
snapshot for some~$1\le i\le n$ then it is also colored properly in every odd
snapshot (that is, every snapshot of type~1 and~3). Furthermore, the edge is
colored monochromatically in every even snapshot (that is, every snapshot of
type~2 and~4). Analogously, if an edge~$\{u_3, v_i\}$ is colored
monochromatically in the first snapshot for some $1\le i\le n$  then it is also
colored monochromatically in every odd snapshot and colored properly in every
even snapshot. This follows from an easily verifiable fact that in every \propDeltaTempColoring\
vertex~$u_3$ is colored different from vertex~$u_4$ in every snapshot. It follows that if an edge~$\{u_3, v_i\}$ is colored
monochromatically in a snapshot~$t$ for some~$1\le i\le n$ and~$1\le t\le T-1$
then~$\{u_3, v_i\}$ needs to be colored properly in snapshot $t+1$ meaning
that~$\{u_4, v_i\}$ is colored monochromatically in snapshot $t+1$.
Symmetrically, if an edge~$\{u_4, v_i\}$ is colored monochromatically in a
snapshot~$t$ for some~$1\le i\le n$ and~$1\le t\le T-1$ then~$\{u_4, v_i\}$
needs to be colored properly in snapshot $t+1$ meaning that~$\{u_3, v_i\}$ is
colored monochromatically in snapshot $t+1$.

Now we are ready to argue that each clause of the \ONEINTHREESAT\ instance is
satisfied. To see this we first take a look at snapshots of type~3. Note that
the triangle consisting of vertices $w_{11}$, $w_{12}$, and $w_{13}$ has
exactly one monochromatic edge. It cannot have three since not all three edges
can be colored properly in the adjacent snapshots. This means that exactly two
out of the three vertices $w_{11}$, $w_{12}$, and $w_{13}$ have the same color. It
follows that exactly two out of the three vertices $w_7$, $w_8$, and $w_9$ have
the same color, since the edges $\{w_7, w_{11}\}$, $\{w_8, w_{12}\}$, and
$\{w_9, w_{13}\}$ need to be colored properly. It is easy to check that this
implies that exactly two out of the three edges $\{w_4,w_7\}$, $\{w_5,w_8\}$,
and $\{w_6,w_9\}$ are colored monochromatically, since edges $\{w_4, w_9\}$,
$\{w_5, w_7\}$, and $\{w_6, w_8\}$ need to be colored properly.

Now we take a look at snapshots of type~2. From the last paragraph follows that
at most one out of the three edges $\{w_4,w_7\}$, $\{w_5,w_8\}$, and
$\{w_6,w_9\}$ is colored monochromatically. Since edges $\{u_3, w_{10}\}$,
$\{w_7, w_{10}\}$, $\{w_8, w_{10}\}$, and $\{w_9, w_{10}\}$ need to be colored
properly we have that vertices $w_7$, $w_8$, and $w_9$ have the same color as
vertex $u_3$. It follows that at most one of the vertices $w_4$, $w_5$, and
$w_6$ is colored in the same color as $u_3$. Since vertices $w_1$, $w_2$, and
$w_3$ form a triangle and edges~$\{w_1,w_4\}$, $\{w_2,w_5\}$, and $\{w_3,w_6\}$
need to be colored properly it follows that exactly one out of the three
vertices $w_1$, $w_2$, and $w_3$ is colored differently from $u_3$. Recall that
$w_1$, $w_2$, and $w_3$ are connected to vertices $v_{i_1}$, $v_{i_2}$, and
$v_{i_3}$ corresponding to the three variable $x_{i_1}$, $x_{i_2}$, and
$x_{i_3}$ that are contained in the clause
that corresponds to the snapshot. The connecting edges need to be colored
properly. It follows that exactly one of the edges $\{u_3,v_{i_1}\}$,
$\{u_3,v_{i_2}\}$, and $\{u_3,v_{i_3}\}$ is colored monochromatically and the
other two are colored properly. It follows that in the first snapshot, exactly
one of the edges $\{u_3,v_{i_1}\}$, $\{u_3,v_{i_2}\}$, and $\{u_3,v_{i_3}\}$ is
colored properly and the other two are colored monochromatically. This means we
set exactly one of the three variables to true and the clause is satisfied.
\end{proof}

Next, we consider a canonical optimization version of \SWTempColoring, which we call \MinSWTempColoring, where the goal is to minimize the number of colors~$k$.
Using \Cref{thm:fpt-size-n}, we provide an FPT-approximation algorithm with an additive error of one where the parameter is the vertex cover number of the underlying graph. Considering that we cannot hope for an exact FPT algorithm for parameter ``vertex cover number of the underlying graph'' unless P $=$ NP (cf.~\cref{thm:underlyingVChard}), this is the best we can get from a classification standpoint.

\begin{theorem}\label{thm:approx}
\MinSWTempColoring\ admits a linear time FPT-approximation algorithm with an additive error of one when parameterized by the vertex cover number of the underlying graph.
\end{theorem}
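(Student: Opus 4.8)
The plan is to pass to the temporal graph induced by a vertex cover of the underlying graph, which has only a bounded number of vertices, solve \MinSWTempColoring\ on it exactly by invoking \Cref{thm:fpt-size-n}, and then argue that a single additional color always suffices to extend an optimal coloring of the induced part to the whole temporal graph.

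Concretely, first I would compute in linear time a vertex cover $C$ of the underlying graph $G$ with $|C|\le 2\vc(G)=:\tau$, for instance from a maximal matching. Write $I:=V\setminus C$; since $C$ is a vertex cover, $I$ is an independent set in $G$. Let $(G[C],\lambda')$ be the temporal graph on vertex set $C$ whose snapshot at time $t$ retains exactly the edges of $E_t$ with both endpoints in $C$; it has $|C|\le\tau$ vertices. For each $k\in\{1,\dots,|C|\}$ (note that $|C|$ colors always suffice for $(G[C],\lambda')$, colouring every vertex of $C$ distinctly at every time) I would run the algorithm behind \Cref{thm:fpt-size-n}, with its parameter $n$ instantiated as the constant $|C|$, to decide whether $(G[C],\lambda')$ admits a proper sliding $\Delta$-window temporal $k$-coloring; let $k_0$ be the smallest feasible $k$ and $\phi_C$ a witnessing coloring, recovered from an $s$-to-$t$ path in that algorithm's DAG and combined via \Cref{lem:overlap}. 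The algorithm then outputs the coloring of $(G,\lambda)$ that assigns each vertex of $C$ its color under $\phi_C$ and assigns every vertex of $I$, in every time slot, one single fresh color $k_0+1$; it uses $k_0+1$ colors.

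Correctness rests on two claims. First, $k_0\le \mathrm{OPT}(G,\lambda)$: restricting any proper sliding $\Delta$-window temporal coloring of $(G,\lambda)$ to $C$ gives such a coloring of $(G[C],\lambda')$ using no more colors, because every edge inside $C$ that appears in some window $W_t$ of $(G[C],\lambda')$ also appears in that window of $(G,\lambda)$ and is therefore properly colored at an active slot of $W_t$, a property preserved by restriction. Second, the produced coloring is feasible with $k_0+1$ colors: edges inside $C$ are properly colored in every window by $\phi_C$; for every edge $\{w,c\}$ with $w\in I$ and $c\in C$ the endpoint $w$ carries the color $k_0+1$, which is never used on $C$, so $\{w,c\}$ is properly colored at \emph{every} time slot at which it is active, hence in every window in which it appears; and no edge lies inside $I$ because $I$ is independent. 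Combining, $k_0\le \mathrm{OPT}(G,\lambda)\le k_0+1$, so the algorithm is an additive-one approximation. (If $G$ happens to be edgeless one outputs $1$ instead of $2$, which only improves the guarantee.)

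For the running time, the vertex cover is found in $O(n+m)$ time, extracting $(G[C],\lambda')$ from the input costs $O(mT)$ time, and each of the $O(|C|)$ calls to the algorithm of \Cref{thm:fpt-size-n} runs in $f(|C|)\cdot T$ time (with a preprocessing lookup table of size $2^{\binom{|C|}{2}}$), so the total is $g(\vc(G))\cdot(n+mT)$, i.e.\ linear in the size of the input temporal graph and fixed-parameter tractable in $\vc(G)$. The one genuinely nontrivial point, which I expect to be the crux, is the extension claim: a priori it is unclear that a coloring of the vertex cover extends with only a constant number of additional colors, since the edges incident to $I$ may switch on and off in complicated temporal patterns; the observation that dissolves the difficulty is that recoloring an \emph{entire} independent set with a single brand-new color simultaneously makes all of its incident edges properly colored at all of their active time steps. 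Everything else in the argument is routine.
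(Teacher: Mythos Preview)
Your proposal is correct and follows essentially the same approach as the paper: compute a vertex cover of the underlying graph, solve the induced instance exactly via \Cref{thm:fpt-size-n}, and give all remaining (independent) vertices one fresh color. The only cosmetic difference is that you use a $2$-approximate vertex cover from a maximal matching while the paper computes a minimum one in FPT time; since the lower-bound argument $k_0\le\mathrm{OPT}$ only needs $C\subseteq V$ and the extension argument only needs $V\setminus C$ to be independent, either choice works and the parameter dependence changes only by a constant factor.
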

\begin{proof}
First, we compute a minimum vertex cover of the underlying graph. Note that this can be done in linear FPT time~\cite{CyganFKLMPPS15}. 
We use the algorithm of \Cref{thm:fpt-size-n} to compute a minimum \propDeltaTempColoring\ for the temporal graph induced by the vertex cover vertices\footnote{Note that the algorithm presented in \Cref{thm:fpt-size-n} solves the decision version of \SWTempColoring\ while we want to solve the minimization problem here. This can be done by trying out all values for the number $k$ of colors between one and the size of the vertex cover of the underlying graph.}. Note that this computation takes linear FPT time and the number of colors used is clearly a lower bound for the minimum number of colors necessary to properly color the whole temporal graph. We color the remaining vertices with a fresh color. This clearly gives a \propDeltaTempColoring\ for the whole temporal graph that uses at most one extra color compared to the optimum.
\end{proof}

\section{Conclusion}

In this paper we introduced and rigorously studied a natural temporal extension of the classical graph coloring problem, 
called \textsc{Sliding Window Temporal Coloring}. 
We showed that \textsc{Sliding Window Temporal Coloring}  is NP-hard even under severe restrictions. On the positive side, we provided a linear time FPT-algorithm for parameter number $n$ of vertices and a linear time FPT-approximation algorithm for parameter vertex cover number of the underlying graph with an additive error of one. We leave as an open question whether for the latter, we can replace vertex cover number by a structurally smaller parameter. A natural extension of our problem would be to impose a restriction on the number of color reassignments of vertices.

\bibliography{temp-coloring-bibliography}
\bibliographystyle{abbrvnat}

\end{document}